\newcommand{\ie}{i.e.}
\newcommand{\etc}{etc.}
\newcommand{\eg}{e.g.}
\newcommand{\etal}{et al.}
\mathchardef\mhyphen="2D
\newcommand{\Hyphen}{\mhyphen}
\newtheorem{theorem}{Theorem}
\newcommand{\mycomment}[1]{\emph{\textcolor{red}{[#1]}}}
\newcommand{\RISCV}{RISC-V}
\newcommand{\regfile}{atomic}
\newcommand{\LdInst}{\mathsf{Ld}}
\newcommand{\StInst}{\mathsf{St}}
\newcommand{\ComInst}{\mathsf{Commit}}
\newcommand{\RecInst}{\mathsf{Reconcile}}
\newcommand{\LWComInst}{\mathsf{LWCom}}
\newcommand{\LdLdFence}{\mathsf{FENCE_{LL}}}
\newcommand{\syncInst}{\mathsf{sync}}
\newcommand{\lwsyncInst}{\mathsf{lwsync}}
\newcommand{\isyncInst}{\mathsf{isync}}
\newcommand{\cmpInst}{\mathsf{cmp}}
\newcommand{\bcInst}{\mathsf{bc}}
\newcommand{\MBInst}{\mathsf{MEMBAR}}
\newcommand{\wmmSSB}{WMM-S}
\newcommand{\IIE}{\ensuremath{\mathrm{I^2E}}}
\newcommand{\True}{\mathsf{True}}
\newcommand{\False}{\mathsf{False}}
\newcommand{\coOrd}{<_{co}}
\newcommand{\tsoNmRule}{TSO-Nm}
\newcommand{\tsoLdRule}{TSO-Ld}
\newcommand{\tsoStRule}{TSO-St}
\newcommand{\tsoComRule}{TSO-Com}
\newcommand{\tsoDeqSbRule}{TSO-DeqSb}
\newcommand{\psoDeqSbRule}{PSO-DeqSb}
\newcommand{\wmmNmRule}{WMM-Nm}
\newcommand{\wmmLdRule}{WMM-Ld}
\newcommand{\wmmStRule}{WMM-St}
\newcommand{\wmmRecRule}{WMM-Rec}
\newcommand{\wmmComRule}{WMM-Com}
\newcommand{\wmmDeqSbRule}{WMM-DeqSb}
\newcommand{\wmmSSBPropSt}{\wmmSSB-Copy}
\newcommand{\wmmSSBDeqSbRule}{\wmmSSB-DeqSb}
\newcommand{\ccmModelName}{CCM}
\newcommand{\flowModelName}{FM} 
\newcommand{\flowReorderRule}{\flowModelName-Reorder}
\newcommand{\flowFlowRule}{\flowModelName-Flow}
\newcommand{\flowBypassRule}{\flowModelName-Bypass}
\newcommand{\ProgOrd}{<_{po}}
\newcommand{\MemOrd}{<_{mo}}
\newcommand{\ReadFrom}{\xrightarrow{r\!f}}
\newcommand{\AxiInstOrd}{Inst-Order}
\newcommand{\AxiLdVal}{Ld-Val}
\newcommand{\orderFunc}{\mathsf{order}}
\setlist{noitemsep, leftmargin=*,topsep=2pt}
\begin{document}
\bstctlcite{bstctl:nodash}

\title{Weak Memory Models: Balancing Definitional Simplicity and Implementation Flexibility}


\author{\IEEEauthorblockN{Sizhuo Zhang, Muralidaran Vijayaraghavan, Arvind}
\IEEEauthorblockA{Computer Science and Artificial Intelligence Laboratory\\
Massachusetts Institute of Technology\\
\{szzhang, vmurali, arvind\}@csail.mit.edu}\thanks{A version of this paper appears in the 26th International Conference on Parallel Architectures and Compilation Techniques (PACT), September, 2017. DOI: 10.1109/PACT.2017.29. \copyright 2017 IEEE.}}

\maketitle

\begin{abstract}
The memory model for \RISCV, a newly developed open source ISA, has not been finalized yet and thus, offers an opportunity to evaluate existing memory models.
We believe \RISCV{} should not adopt the memory models of POWER or ARM, because their axiomatic and operational definitions are too complicated.
We propose two new weak memory models: WMM and \wmmSSB{}, which balance definitional simplicity and implementation flexibility differently.
Both allow all instruction reorderings except overtaking of loads by a store. 
We show that this restriction has little impact on performance and it considerably simplifies operational definitions. 
It also rules out the out-of-thin-air problem that plagues many definitions. 
WMM is simple (it is similar to the Alpha memory model), but it disallows behaviors arising due to shared store buffers and shared write-through caches (which are seen in POWER processors).
\wmmSSB{}, on the other hand, is more complex and allows these behaviors.
We give the operational definitions of both models using \emph{Instantaneous Instruction Execution} (I\textsuperscript{2}E), which has been used in the definitions of SC and TSO.
We also show how both models can be implemented using conventional cache-coherent memory systems and out-of-order processors, and encompasses the behaviors of most known optimizations.

\end{abstract}

\begin{IEEEkeywords}
weak memory model
\end{IEEEkeywords}

%
\IEEEpeerreviewmaketitle

\section{Introduction}\label{sec: intro}



\newcolumntype{P}[1]{>{\raggedright\arraybackslash}p{#1}}
\begin{figure*}[t]
    \centering
    \footnotesize
    \begin{tabular}{|p{9ex}|p{14.5ex}|p{20ex}|p{17.5ex}|p{19ex}|p{24ex}|p{14ex}|}
        \hline
        & \multicolumn{2}{|c|}{Definition} & \multicolumn{4}{|c|}{Model properties / Implementation flexibility} \\ \cline{2-7}
        & Operational model & Axiomatic model & Store atomicity & Allow shared write-through cache/shared store buffer & Instruction reorderings & Ordering of data-dependent loads \\ \hline
        SC & Simple; \IIE{}~\cite{lamport1979make} & Simple~\cite{lamport1979make} & Single-copy atomic & No & None & Yes \\ \hline
        TSO & Simple; \IIE{}~\cite{sewell2010x86} & Simple~\cite{sparc1992sparcv8} & Multi-copy atomic & No & Only St-Ld reordering & Yes \\ \hline
        RMO & Doesn't exist & Simple; needs fix~\cite{weaver1994sparc} & Multi-copy atomic & No & All four & Yes \\ \hline
        Alpha & Doesn't exist & Medium~\cite{alpha1998} & Multi-copy atomic & No & All four & No \\ \hline
        RC & Doesn't exist & Medium~\cite{gharachorloo1990memory} & Unclear & \emph{\textcolor{red}{No}} & All four & Yes \\ \hline
        ARM and POWER & Complex; non \IIE{}~\cite{sarkar2011understanding,flur2016modelling} & Complex \cite{mador2012axiomatic,alglave2014herding} & Non-atomic & Yes & All four & Yes \\ \hline
        WMM & Simple; \IIE{} & Simple & Multi-copy atomic & No & All except Ld-St reordering & No \\ \hline
        \wmmSSB{} & Medium; \IIE{} & Doesn't exist & Non-atomic & Yes & All except Ld-St reordering & No \\ \hline
    \end{tabular}
    \caption{Summary of different memory models}\label{tab: mm sumary}
\end{figure*}

A memory model for an ISA is the specification of all legal multithreaded program behaviors.
If microarchitectural changes conform to the memory model, software remains compatible.
Leaving the meanings of corner cases to be implementation dependent makes the task of proving the correctness of multithreaded programs, microarchitectures and cache protocols untenable.
While strong memory models like SC and SPARC/Intel-TSO are well understood, weak memory models of commercial ISAs like ARM and POWER are driven too much by microarchitectural details, and inadequately documented by manufacturers.
For example, the memory model in the POWER ISA manual~\cite{power2013version} is ``defined'' as reorderings of events, and an event refers to performing an instruction with respect to a processor. 
While reorderings capture some properties of memory models, it does not specify the result of each load, which is the most important information to understand program behaviors.
This forces the researchers to formalize these weak memory models by empirically determining the allowed/disallowed behaviors of commercial processors and then constructing models to fit these observations~\cite{alglave2009semantics,Alglave2011,alglave2012formal,mador2012axiomatic,alglave2014herding,sarkar2011understanding,sarkar2012synchronising,alglave2013software,flur2016modelling}.

The newly designed open-source \RISCV{} ISA~\cite{riscv} offers a unique opportunity to reverse this trend by giving a clear definition with understandable implications for implementations.
The \RISCV{} ISA manual only states that its memory model is weak in the sense that it allows a variety of instruction reorderings~\cite{waterman2016riscv}.
However, so far no detailed definition has been provided, and the memory model is not fixed yet.

In this paper we propose two weak memory models for \RISCV{}: WMM and \wmmSSB{}, which balance definitional simplicity and implementation flexibility differently.
The difference between the two models is regarding store atomicity, which is often classified into the following three types~\cite{lustig2015armor}:
\begin{itemize}
    \item \emph{Single-copy atomic}: a store becomes visible to all processors at the same time, \eg{}, in SC.
    \item \emph{Multi-copy atomic}: a store becomes visible to the issuing processor before it is advertised simultaneously to all other processors, \eg{}, in TSO and Alpha~\cite{alpha1998}.
    \item \emph{Non-atomic} (or non-multi-copy-atomic): a store becomes visible to different processors at different times, \eg{}, in POWER and ARM.
\end{itemize}
Multi-copy atomic stores are caused by the store buffer or write-through cache that is private to each processor.
Non-atomic stores arise (mostly) because of the sharing of a store buffer or a write-through cache by multiple processors, and such stores considerably complicate the formal definitions~\cite{sarkar2011understanding,flur2016modelling}.
WMM is an Alpha-like memory model which permits only multi-copy atomic stores and thus, prohibits shared store buffers or shared write-through caches in implementations.
\wmmSSB{} is an ARM/POWER-like memory model which admits non-atomic stores. 
We will present the implementations of both models using out-of-order (OOO) processors and cache-coherent memory systems. 
In particular, WMM and \wmmSSB{} allow the OOO processors in multicore settings to use all speculative techniques which are valid for uniprocessors, including even the \emph{load-value speculation}~\cite{lipasti1996value,Martin:2001:CIV:563998.564039,ghandour2010potential,perais2014eole,perais2014practical}, \emph{without} additional checks or logic.

We give operational definitions of both WMM and \wmmSSB{}.
An operational definition specifies an abstract machine, and the legal behaviors of a program under the memory model are those that can result by running the program on the abstract machine.
We observe a growing interest in operational definitions: memory models of x86, ARM and POWER have all been formalized operationally~\cite{owens2009better,sewell2010x86,sarkar2011understanding,sarkar2012synchronising,flur2016modelling}, and researchers are even seeking operational definitions for high-level languages like C++~\cite{Kang:2017:PSR:3009837.3009850}.
This is perhaps because all possible program results can be derived from operational definitions mechanically while axiomatic definitions require guessing the whole program execution at the beginning.
For complex programs with dependencies, loops and conditional branches, guessing the whole execution may become prohibitive.

Unfortunately, the operational models of ARM and POWER are too complicated because their abstract machines involve microarchitectural details like reorder buffers (ROBs), partial and speculative instruction execution, instruction replay on speculation failure, \etc{}
The operational definitions of WMM and \wmmSSB{} are much simpler because they are described in terms of \emph{Instantaneous Instruction Execution} (\IIE), which is the style used in the operational definitions of SC~\cite{lamport1979make} and TSO~\cite{owens2009better,sewell2010x86}.
An \IIE{} abstract machine consists of $n$ atomic processors and an $n$-ported \regfile{} memory.
The atomic processor executes instructions instantaneously and in order, so it always has the up-to-date architectural (register) state.
The \regfile{} memory executes loads and stores instantaneously.
Instruction reorderings and store atomicity/non-atomicity are captured by including different types of buffers between the processors and the \regfile{} memory, like the store buffer in the definition of TSO.
In the background, data moves between these buffers and the memory asynchronously, \eg, to drain a value from a store buffer to the memory.

\IIE{} definitions free programmers from reasoning partially executed instructions, which is unavoidable for ARM and POWER operational definitions.
One key tradeoff to achieve \IIE{} is to forbid a store to overtake a load, \ie, disallow Ld-St reordering.
Allowing such reordering requires each processor in the abstract machine to maintain multiple unexecuted instructions in order to see the effects of future stores, and the abstract machine has to contain the complicated ROB-like structures.
Ld-St reordering also complicates axiomatic definitions because it creates the possibility of ``out-of-thin-air'' behaviors~\cite{Boehm:2014:OGA:2618128.2618134}, which are impossible in any real implementation and consequently must be disallowed.
We also offer evidence, based on simulation experiments, that disallowing Ld-St reordering has no discernible impact on performance.

For a quick comparison, we summarize the properties of common memory models in Figure~\ref{tab: mm sumary}.
SC and TSO have simple definitions but forbid Ld-Ld and St-St reorderings, and consequently, are not candidates for \RISCV.
WMM is similar to RMO and Alpha but neither has an operational definition.
Also WMM has a simple axiomatic definition, while Alpha requires a complicated axiom to forbid out-of-thin-air behaviors (see Section~\ref{sec: wmm axiom}), and RMO has an incorrect axiom about data-dependency ordering (see Section~\ref{sec: alpha rmo rc bad}).

ARM, POWER, and \wmmSSB{} are similar models in the sense that they all admit non-atomic stores.
While the operational models of ARM and POWER are complicated, \wmmSSB{} has a simpler \IIE{} definition and allows competitive implementations (see Section~\ref{sec: ssb hw = model}).
The axiomatic models of ARM and POWER are also complicated: 
four relations in the POWER axiomatic model~\cite[Section 6]{alglave2014herding} are defined in a fixed point manner, \ie, their definitions mutually depend on each other.

Release Consistency (RC) are often mixed with the concept of ``SC for data-race-free (DRF) programs''~\cite{adve1990weak}.
It should be noted that ``SC for DRF'' is inadequate for an ISA memory model, which must specify behaviors of \emph{all} programs.
The original RC definition~\cite{gharachorloo1990memory} attempts to specify all program behaviors, and are more complex and subtle than the ``SC for DRF'' concept.
We show in Section~\ref{sec: alpha rmo rc bad} that the RC definition fails a litmus test for non-atomic stores and forbids shared write-through caches in implementation.

This paper makes the following contributions: 
\begin{enumerate}
    \item WMM, the \emph{first} weak memory model that is defined in \IIE{} and allows Ld-Ld reordering, and its axiomatic definition;
    \item \wmmSSB{}, an extension on WMM that admits non-atomic stores and has an \IIE{} definition;
    \item WMM and \wmmSSB{} implementations based on OOO processors that admit all uniprocessor speculative techniques (such as load-value prediction) without additional checks; 
    \item Introduction of \emph{invalidation buffers} in the \IIE{} definitional framework to model Ld-Ld and other reorderings.
\end{enumerate}

\noindent\textbf{Paper organization:}
Section~\ref{sec: related work} presents the related work.
Section~\ref{sec: litmus} gives litmus tests for distinguishing memory models.
Section~\ref{sec: I2E} introduces \IIE{}.
Section~\ref{sec: WMM} defines WMM.
Section~\ref{sec: wmm impl} shows the WMM implementation using OOO processors.
Section~\ref{sec: ld-st reorder} evaluates the performance of WMM and the influence of forbidding Ld-St reordering.
Section~\ref{sec: non atomic mem} defines \wmmSSB{}.
Section~\ref{sec: ssb impl} presents the \wmmSSB{} implementations with non-atomic stores.
Section~\ref{sec: alpha rmo rc bad} shows the problems of RC and RMO.
Section~\ref{sec: conclude} offers the conclusion.

\section{Related Work} \label{sec: related work}

SC~\cite{lamport1979make} is the simplest model, but naive implementations of SC suffer from poor performance.
Although researchers have proposed aggressive techniques to preserve SC~\cite{gharachorloo1991two,ranganathan1997using,guiady1999sc+,gniady2002speculative,ceze2007bulksc,wenisch2007mechanisms,blundell2009invisifence,singh2012end,lin2012efficient,gope2014atomic}, they are rarely adopted in commercial processors perhaps due to their hardware complexity.
Instead the manufactures and researchers have chosen to present weaker memory models, \eg{}, TSO~\cite{sparc1992sparcv8,owens2009better,sewell2010x86,Sarkar:2009:SXM:1594834.1480929}, PSO~\cite{weaver1994sparc}, RMO~\cite{weaver1994sparc}, Alpha~\cite{alpha1998}, Processor Consistency~\cite{goodman1991cache}, Weak Consistency~\cite{dubois1986memory}, RC~\cite{gharachorloo1990memory}, CRF~\cite{shen1999commit}, Instruction Reordering + Store Atomicity~\cite{arvind2006memory}, POWER~\cite{power2013version} and ARM~\cite{armv7ar}.
The tutorials by Adve \etal{} \cite{adve1996shared} and by Maranget \etal{} \cite{maranget2012tutorial} provide relationships among some of these models.


A large amount of research has also been devoted to specifying the memory models of high-level languages: C++ \cite{c++n4527,boehm2008foundations,batty2011mathematizing,Batty:2016:OSA:2914770.2837637,Kang:2017:PSR:3009837.3009850}, Java \cite{manson2005java,cenciarelli2007java, maessen2000improving}, \etc{}
We will provide compilation schemes from C++ to WMM and \wmmSSB{}.


Recently, Lustig \etal{} have used Memory Ordering Specification Tables (MOSTs) to describe memory models, and proposed a hardware scheme to dynamically convert programs across memory models described in MOSTs \cite{lustig2015armor}.
MOST specifies the ordering strength (\eg{}, locally ordered, multi-copy atomic) of two instructions from the same processor under different conditions (\eg{}, data dependency, control dependency).
Our work is orthogonal in that we propose new memory models with operational definitions.

\section{Memory Model Litmus Tests} \label{sec: litmus}

Here we offer two sets of litmus tests to highlight the differences between memory models regarding store atomicity and instruction reorderings, including enforcement of dependency-ordering.
All memory locations are initialized to 0.

\subsection{Store Atomicity Litmus Tests}

Figure~\ref{fig: store atomicity litmus} shows four litmus tests to distinguish between these three types of stores.
We have deliberately added data dependencies and Ld-Ld fences ($\LdLdFence$) to these litmus tests to prevent instruction reordering, \eg, the data dependency between $I_2$ and $I_3$ in Figure \ref{fig: dekker extend}.
Thus the resulting behaviors can arise only because of different store atomicity properties.
We use $\LdLdFence$ for memory models that can reorder data-dependent loads, \eg, $I_5$ in Figure~\ref{fig: wrc} would be the $\mathsf{MB}$ fence for Alpha.
For other memory models that order data-dependent loads (\eg, ARM), $\LdLdFence$ could be replaced by a data dependency (like the data dependency between $I_2$ and $I_3$ in Figure~\ref{fig: dekker extend}).
The Ld-Ld fences only stop Ld-Ld reordering; they do not affect store atomicity in these tests.

\begin{figure}[!htb]
    \centering
    \subfloat[SBE: test for multi-copy atomic stores\label{fig: dekker extend}]{
        \footnotesize
        \begin{tabular}{|p{23ex}|p{23ex}|}
            \hline
            {Proc. P1} & {Proc. P2} \\ 
            \hline
            $I_1: \StInst\ a\ 1$            & $I_4: \StInst\ b\ 1$ \\
            $I_2: r_1 = \LdInst\ a$         & $I_5: r_3 = \LdInst\ b$ \\
            $I_3: r_2 = \LdInst\ (b+r_1-1)$ & $I_6: r_4 = \LdInst\ (a+r_3-1)$ \\ \hline
            \multicolumn{2}{|l|}{SC forbids but TSO allows: $r_1 = 1, r_2 = 0, r_3 = 1, r_4 = 0$} \\ \hline
        \end{tabular}
    }\\
    \subfloat[WRC: test for non-atomic stores~\cite{sarkar2011understanding}\label{fig: wrc}]{
        \footnotesize
        \begin{tabular}{|l|l|l|}
            \hline
            Proc. P1 & Proc. P2 & Proc. P3 \\
            \hline
            $I_1: \StInst\ a\ 2$ & $I_2: r_1=\LdInst\ a$        & $I_4: r_2=\LdInst\ b$ \\
                                 & $I_3: \StInst\ b\ (r_1 - 1)$ & $I_5: \LdLdFence$ \\
                                 &                              & $I_6: r_3=\LdInst\ a$ \\
            \hline
            \multicolumn{3}{|p{42ex}|}{TSO, RMO and Alpha forbid, but RC, ARM and POWER allow: $r_1=2,\ r_2=1,\ r_3=0$} \\
            \hline
        \end{tabular}
    }\\
    \subfloat[WWC: test for non-atomic stores~\cite{maranget2012tutorial,wwctest}\label{fig: wwc}]{
        \footnotesize
        \begin{tabular}{|l|l|l|}
            \hline
            Proc. P1 & Proc. P2 & Proc. P3 \\
            \hline
            $I_1: \StInst\ a\ 2$ & $I_2: r_1=\LdInst\ a$        & $I_4: r_2=\LdInst\ b$ \\
                                 & $I_3: \StInst\ b\ (r_1 - 1)$ & $I_5: \StInst\ a\ r_2$ \\
            \hline
            \multicolumn{3}{|p{45ex}|}{TSO, RMO, Alpha and \textcolor{red}{\emph{RC}} forbid, but ARM and POWER allow: $r_1=2,\ r_2=1,\ m[a]=2$} \\
            \hline
        \end{tabular}
    }\\
    \subfloat[IRIW: test for non-atomic stores~\cite{sarkar2011understanding}\label{fig: iriw}]{
        \footnotesize
        \begin{tabular}{|l|l|l|l|}
            \hline
            Proc. P1 & Proc. P2 & Proc. P3 & Proc. P4 \\
            \hline
            $I_1: \StInst\ a\ 1$ & $I_2: r_1=\LdInst\ a$ & $I_5: \StInst\ b\ 1$  & $I_6: r_3=\LdInst\ b$ \\
                                 & $I_3: \LdLdFence$     &                       & $I_7: \LdLdFence$ \\
                                 & $I_4: r_2=\LdInst\ b$ &                       & $I_8: r_4=\LdInst\ a$ \\
            \hline
            \multicolumn{4}{|p{50ex}|}{TSO, RMO and Alpha forbid, but RC, ARM and POWER allow: $r_1=1,\ r_2=0,\ r_3=1,\ r_4=0$} \\
            \hline
        \end{tabular}
    }
    \caption{Litmus tests for store atomicity}\label{fig: store atomicity litmus}
\end{figure}

\noindent{\textbf{SBE:}}
In a machine with {single-copy atomic stores} (\eg{}, an SC machine), when both $I_2$ and $I_5$ have returned value 1, stores $I_1$ and $I_4$ must have been globally advertised.
Thus $r_2$ and $r_4$ cannot both be 0.
However, a machine with store buffers (\eg{}, a TSO machine) allows P1 to forward the value of $I_1$ to $I_2$ locally without advertising $I_1$ to other processors, violating the single-copy atomicity of stores.

\noindent{\textbf{WRC:}}
Assuming the store buffer is private to each processor (\ie{}, {multi-copy atomic stores}), if one observes $r_1=2$ and $r_2=1$ then $r_3$ must be 2.
However, if an architecture allows a store buffer to be shared by P1 and P2 but not P3, then P2 can see the value of $I_1$ from the shared store buffer before $I_1$ has updated the memory, allowing P3 to still see the old value of $a$.
A write-through cache shared by P1 and P2 but not P3 can cause this \emph{non-atomic store} behavior in a similar way, \eg, $I_1$ updates the shared write-through cache but has not invalidated the copy in the private cache of P3 before $I_6$ is executed.

\noindent{\textbf{WWC:}}
This litmus test is similar to WRC but replaces the load in $I_6$ with a store.
The behavior is possible if P1 and P2 share a write-through cache or store buffer.
However, RC forbids this behavior (see Section~\ref{sec: alpha rmo rc bad}).

\noindent{\textbf{IRIW:}}
This behavior is possible if P1 and P2 share a write-through cache or a store buffer and so do P3 and P4.

\subsection{Instruction Reordering Litmus Tests}\label{sec: reorder litmus}
Although processors fetch and commit instructions in order, speculative and out-of-order execution causes behaviors as if instructions were reordered.
Figure~\ref{fig: reorder litmus} shows the litmus tests on these reordering behaviors.

\begin{figure}[!htb]
    \centering
    \subfloat[SB: test for St-Ld reordering~\cite{maranget2012tutorial}\label{fig: dekker}]{
        \footnotesize
        \begin{tabular}{|l|l|}
            \hline
            {Proc. P1} & {Proc. P2} \\ 
            \hline
            $\!\! I_1: \StInst\ a\ 1 \!\!$    & $\!\! I_3: \StInst\ b\ 1 \!\!$ \\
            $\!\! I_2: r_1 = \LdInst\ b \!\!$ & $\!\! I_4: r_2 = \LdInst\ a \!\!$ \\ \hline
            \multicolumn{2}{|p{25ex}|}{SC forbids, but TSO allows: $r_1 = 0, r_2 = 0 \!\!$} \\ \hline
        \end{tabular}
    }
    \hspace{2pt}
    \subfloat[MP: test for Ld-Ld and St-St reorderings~\cite{sarkar2011understanding}\label{fig: msg pass}]{
        \footnotesize
        \begin{tabular}{|l|l|}
            \hline
            {Proc. P1} & {Proc. P2} \\ 
            \hline
            $\!\! I_1: \StInst\ a\ 1$  & $\!\! I_3: r_1 = \LdInst\ b \!\!$ \\
            $\!\! I_2: \StInst\ b\ 1$   & $\!\! I_4: r_2 = \LdInst\ a \!\!$ \\ 
            \hline
            \multicolumn{2}{|p{26ex}|}{TSO forbids, but Alpha and RMO allow: $r_1=1, r_2 = 0$} \\ 
            \hline
        \end{tabular}
    }\\
    \subfloat[LB: test for Ld-St reordering~\cite{sarkar2011understanding}\label{fig: ld buffer}]{
        \footnotesize
        \begin{tabular}{|l|l|}
            \hline
            {Proc. P1} & {Proc. P2} \\ 
            \hline
            $\!\! I_1: r_1 \!=\! \LdInst\ b \!\!$ & $\!\! I_3: r_2 \!=\! \LdInst\ a \!\!$ \\
            $\!\! I_2: \StInst\ a\ 1 \!\!$    & $\!\! I_4: \StInst\ b\ 1 \!\!$ \\ \hline
            \multicolumn{2}{|p{23.5ex}|}{TSO forbids, but Alpha, RMO, RC, POWER and ARM allow: $r_1 = r_2 = 1$} \\ \hline
        \end{tabular}
    }
    \hspace{2pt}
    \subfloat[MP+Ctrl: test for control-dependency ordering\label{fig: mp+ctrl}]{
        \footnotesize
        \begin{tabular}{|l|l|}
            \hline
            Proc. P1 & Proc. P2 \\
            \hline
            $\!\! I_1: \StInst\ a\ 1 \!\!$  & $\!\! I_4: r_1 = \LdInst\ b \!\!$ \\
            $\!\! I_2: \mathsf{FENCE} \!\!$ & $\!\! I_5: \mathsf{if}(r_1\!\neq\! 0)\ \mathsf{exit}\!\!\!$ \\
            $\!\! I_3: \StInst\ b\ 1 \!\!$  & $\!\! I_6: r_2 = \LdInst\ a \!\!$\\
            \hline
            \multicolumn{2}{|p{28.5ex}|}{Alpha, RMO, RC, ARM and POWER allow: $r_1=1,r_2=0$} \\
            \hline
        \end{tabular}
    }\\
    \subfloat[MP+Mem: test for memory-dependency ordering\label{fig: mp+mem}]{
        \footnotesize
        \begin{tabular}{|l|l|}
            \hline
            Proc. P1 & Proc. P2 \\
            \hline
            $\!\! I_1\!: \StInst\ a\ 1 \!\!$    & $\!\! I_4\!:\! r_1 = \LdInst\ b \!\!$ \\
            $\!\! I_2\!: \mathsf{FENCE} \!\!$   & $\!\! I_5\!:\! \StInst\ (r_1+a)\ 42\!\!\!$ \\
            $\!\! I_3\!: \StInst\ b\ 100 \!\!$  & $\!\! I_6\!:\! r_2 = \LdInst\ a \!\!$\\
            \hline
            \multicolumn{2}{|p{27ex}|}{Alpha, RMO, RC, ARM and POWER allow: $r_1=100, r_2=0$} \\
            \hline
        \end{tabular}
    }
    \hspace{2pt}
    \subfloat[MP+Data: test for data-dependency ordering \label{fig: mp+data}]{
        \footnotesize
        \begin{tabular}{|l|l|}
            \hline
            Proc. P1 & Proc. P2 \\
            \hline
            $\!\! I_1\!: \StInst\ a\ 1 \!\!$  & $\!\! I_4\!: r_1 = \LdInst\ b \!\!$ \\
            $\!\! I_2\!: \mathsf{FENCE} \!\!$ & $\!\! I_5\!: r_2 = \LdInst\ r_1 \!\!$ \\
            $\!\! I_3\!: \StInst\ b\ a \!\!$  & \\
            \hline
            \multicolumn{2}{|p{25ex}|}{RMO, RC, ARM and POWER forbid, but Alpha allows: $r_1=a,r_2=0$} \\
            \hline
        \end{tabular}
    }
    \caption{Litmus tests for instruction reorderings}\label{fig: reorder litmus}
\end{figure}

\noindent{\textbf{SB:}}
A TSO machine can execute $I_2$ and $I_4$ while $I_1$ and $I_3$ are buffered in the store buffers.
The resulting behavior is as if the store and the load were reordered on each processor.

\noindent{\textbf{MP:}}
In an Alpha machine, $I_1$ and $I_2$ may be drained from the store buffer of P1 out of order; $I_3$ and $I_4$ in the ROB of P2 may be executed out of order.
This is as if P1 reordered the two stores and P2 reordered the two loads.

\noindent{\textbf{LB:}}
Some machines may enter a store into the memory before all older instructions have been committed. 
This results in the Ld-St reordering shown in Figure~\ref{fig: ld buffer}.
Since instructions are committed in order and stores are usually not on the critical path, the benefit of the eager execution of stores is limited.
In fact we will show by simulation that Ld-St reordering does not improve performance (Section~\ref{sec: ld-st reorder}).

\noindent\textbf{MP+Ctrl:}
This test is a variant of MP.
The two stores in P1 must update memory in order due to the fence.
Although the execution of $I_6$ is conditional on the result of $I_4$, P2 can issue $I_6$ speculatively by predicting branch $I_5$ to be not taken.
The execution order $I_6, I_1, I_2, I_3, I_4, I_5$ results in $r_1=1$ and $r_2=0$.

\noindent\textbf{MP+Mem:}
This test replaces the control dependency in MP+Ctrl with a (potential) memory dependency, \ie, the unresolved store address of $I_5$ may be the same as the load address of $I_6$ before $I_4$ is executed, 
However, P2 can execute $I_6$ speculatively by predicting the addresses are not the same.
This results in having $I_6$ overtake $I_4$ and $I_5$.

\noindent{\textbf{MP+Data:}}
This test replaces the control dependency in MP+Ctrl with a data dependency, \ie, the load address of $I_5$ depends on the result of $I_4$.
A processor with \emph{load-value prediction}~\cite{lipasti1996value,Martin:2001:CIV:563998.564039,ghandour2010potential,perais2014eole,perais2014practical} may guess the result of $I_4$ before executing it, and issue $I_5$ speculatively.
If the guess fails to match the real execution result of $I_4$, then $I_5$ would be killed.
But, if the guess is right, then essentially the execution of the two data-dependent loads ($I_4$ and $I_5$) has been reordered.

\subsection{Miscellaneous Tests}
All programmers expect memory models to obey \emph{per-location SC}~\cite{cantin2003complexity}, \ie, all accesses to a single address appear to execute in a sequential order which is consistent with the program order of each thread (Figure~\ref{fig: corr}).

\begin{figure}[!htb]
    \centering
    \begin{minipage}[b]{0.45\columnwidth}
        \footnotesize
        \centering
        \begin{tabular}{|l|l|}
            \hline
            Proc. P1 & {Proc. P2$\!\!$} \\ 
            \hline
            $\!\! I_1: r_1 = \LdInst\ a \!\!$ & $\!\! I_3: \StInst\ a\ 1 \!\!$ \\ 
            $\!\! I_2: r_2 = \LdInst\ a \!\!$ & \\ 
            \hline
            \multicolumn{2}{|p{24ex}|}{Models with per-location SC forbid: $r_1 = 1,r_2=0$} \\
            \hline
        \end{tabular}
        \caption{Per-location SC} \label{fig: corr}
    \end{minipage}
    \hspace{1pt}
    \begin{minipage}[b]{0.52\columnwidth}
        \centering
        \footnotesize
        \begin{tabular}{|l|l|}
            \hline
            {Proc. P1} & {Proc. P2} \\ 
            \hline
            $\!\! I_1: r_1 = \LdInst\ b \!\!$ & $\!\! I_3: r_2 = \LdInst\ a \!\!$ \\
            $\!\! I_2: \StInst\ a\ r_1 \!\!$    & $\!\! I_4: \StInst\ b\ r_2 \!\!$ \\ \hline
            \multicolumn{2}{|l|}{\hspace{-3pt}All models forbid: $r_1 = r_2 = 42 \!\!$} \\ \hline
        \end{tabular}
        \caption{Out-of-thin-air read} \label{fig: no thin air}
    \end{minipage}
\end{figure}

Out-of-thin-air behaviors (Figure~\ref{fig: no thin air}) are impossible in real implementations. 
Sometimes such behaviors are permitted by axiomatic models due to incomplete axiomatization.

\section{Defining Memory Models in I\textsuperscript{2}E} 
\label{sec: I2E}

Figure~\ref{fig: 3 op models} shows the \IIE{} abstract machines for SC, TSO/PSO and WMM models.
All abstract machines consist of $n$ atomic processors and an $n$-ported \regfile{} memory $m$.
Each processor contains a register state $s$, which represents all architectural registers, including both the general purpose registers and special purpose registers, such as PC.
The abstract machines for TSO/PSO and WMM also contain a \emph{store buffer} $sb$ for each processor, and the one for WMM also contains an \emph{invalidation buffer} $ib$ for each processor as shown in the figure.
In the abstract machines all buffers are unbounded.
The operations of these buffers will be explained shortly.


\begin{figure}[!htb]
	\centering
	\subfloat[SC\label{fig: sc op model}]{\includegraphics[width=0.3\columnwidth]{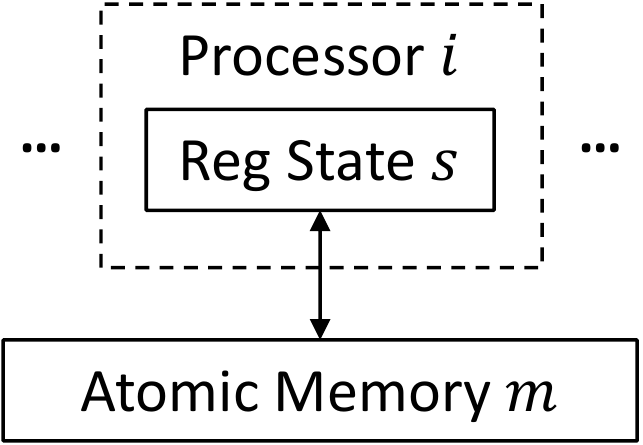}}\hspace{1pt}
    \subfloat[TSO/PSO\label{fig: tso op model}]{\includegraphics[width=0.3\columnwidth]{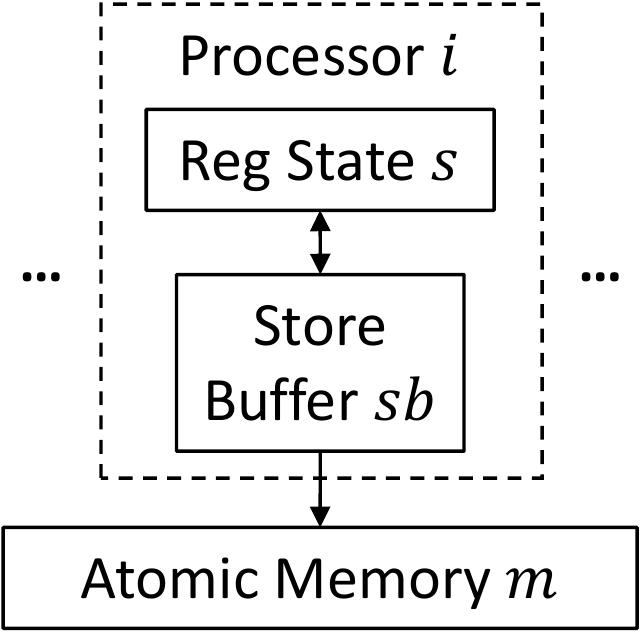}}\hspace{1pt}
    \subfloat[WMM\label{fig: wmm op model}]{\includegraphics[width=0.35\columnwidth]{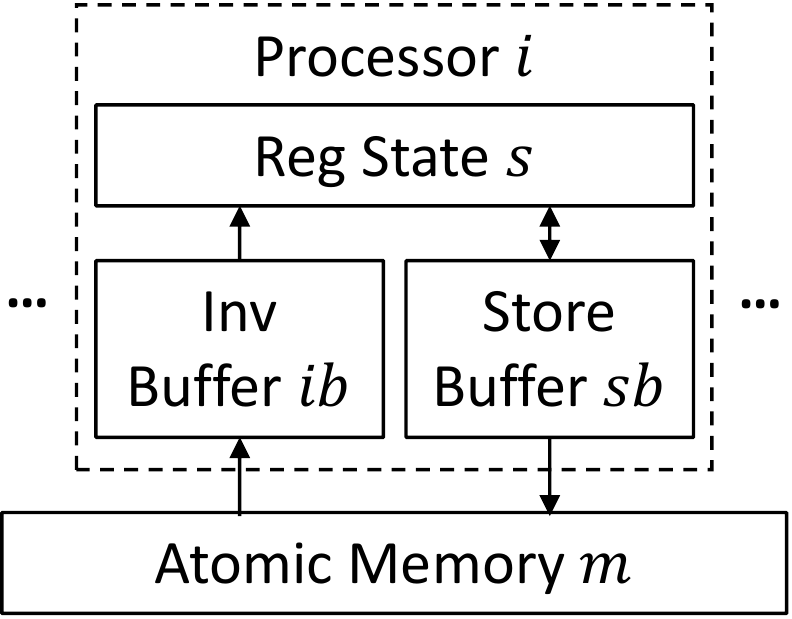}}
    \caption{\IIE{} abstract machines for different models}\label{fig: 3 op models}
\end{figure}

The operations of the SC abstract machine are the simplest: in one step we can select any processor to execute the next instruction on that processor \emph{atomically}.
That is, if the instruction is a non-memory instruction (\eg, ALU or branch), it just modifies the register states of the processor; if it is a load, it reads from the \regfile{} memory instantaneously and updates the register state; and if it is a store, it updates the \regfile{} memory instantaneously and increments the PC.

\subsection{TSO Model}

The TSO abstract machine proposed in~\cite{owens2009better,sewell2010x86} (Figure \ref{fig: tso op model}) contains a store buffer $sb$ for each processor. 
Just like SC, any processor can execute an instruction atomically, and if the instruction is a non-memory instruction, it just modifies the local register state.
A store is executed by inserting its $\langle \mathrm{address, value}\rangle$ pair into the local $sb$ instead of writing the data in memory.
A load first looks for the load address in the local $sb$ and returns the value of the \emph{youngest} store for that address.
If the address is not in the local $sb$, then the load returns the value from the \regfile{} memory.
TSO can also perform a \emph{background} operation, which removes the \emph{oldest} store from a $sb$ and writes it into the \regfile{} memory.
As we discussed in Section~\ref{sec: litmus}, store buffer allows TSO to do St-Ld reordering, i.e., pass the SB litmus test (Figure~\ref{fig: dekker}). 

In order to enforce ordering in accessing the memory and to rule out non-SC behaviors, TSO has a fence instruction, which we refer to as $\ComInst$.
When a processor executes a $\ComInst$ fence, it gets blocked unless its $sb$ is empty.
Eventually, any $sb$ will become empty as a consequence of the background operations that move data from the $sb$ to the memory.
For example, we need to insert a $\ComInst$ fence after each store in Figure~\ref{fig: dekker} to forbid the non-SC behavior in TSO.

We summarize the operations of the TSO abstract machine in Figure~\ref{fig: tso operations}.
Each operation consists of a \emph{predicate} and an \emph{action}.
The operation can be performed by taking the action only when the predicate is true.
Each time we perform only one operation (either instruction execution or $sb$ dequeue) atomically in the whole system  (\eg{}, no two processors can execute instructions simultaneously).
The choice of which operation to perform is nondeterministic.

\begin{figure}[!htb]
    \begin{boxedminipage}{\columnwidth}
        \small
        \textbf{\tsoNmRule} (non-memory execution)\\
        \emph{Predicate:} The next instruction of a processor is a non-memory instruction. \\
        \emph{Action:} Instruction is executed by local computation.
        \\
        \textbf{\tsoLdRule} (load execution) \\
        \emph{Predicate:} The next instruction of a processor is a load. \\
        \emph{Action:} Assume the load address is $a$.
        The load returns the value of the \emph{youngest} store for $a$ in $sb$ if $a$ is present in the $sb$ of the processor, otherwise, the load returns $m[a]$, \ie{}, the value of address $a$ in the \regfile{} memory.
        \\
        \textbf{\tsoStRule} (store execution) \\
        \emph{Predicate:} The next instruction of a processor is a store. \\
        \emph{Action:} Assume the store address is $a$ and the store value is $v$.
        The processor inserts the store $\langle a,v\rangle$ into its $sb$.
        \\
        \textbf{\tsoComRule} ($\ComInst$ execution) \\
        \emph{Predicate:} The next instruction of a processor is a $\ComInst$ and the $sb$ of the processor is empty. \\
        \emph{Action:} The $\ComInst$ fence is executed simply as a NOP.
        \\
        \textbf{\tsoDeqSbRule} (background store buffer dequeue) \\
        \emph{Predicate:} The $sb$ of a processor is not empty.\\
        \emph{Action:} Assume the $\langle \mathrm{address, value}\rangle$ pair of the {oldest} store in the $sb$ is $\langle a,v\rangle$.
        Then this store is removed from $sb$, and the \regfile{} memory $m[a]$ is updated to $v$.
    \end{boxedminipage}\vspace{1pt}
    \begin{boxedminipage}{\columnwidth}
        \small
        \textbf{\psoDeqSbRule} (background store buffer dequeue) \\
        \emph{Predicate:} The $sb$ of a processor is not empty.\\
        \emph{Action:} Assume the value of the {oldest} store for \emph{some} address $a$ in the $sb$ is $v$.
        Then this store is removed from $sb$, and the \regfile{} memory $m[a]$ is updated to $v$.
    \end{boxedminipage}
    \caption{Operations of the TSO/PSO abstract machine}\label{fig: tso operations}
\end{figure}

\noindent\textbf{Enabling St-St reordering:}
We can extend TSO to PSO by changing the background operation to dequeue the oldest store for \emph{any} address in $sb$ (see the \psoDeqSbRule{} operation in Figure~\ref{fig: tso operations}).
This extends TSO by permitting St-St reordering.

\section{WMM Model} \label{sec: WMM}

WMM allows Ld-Ld reordering in addition to the reorderings allowed by PSO.
Since a reordered load may read a stale value, we introduce a conceptual device called \emph{invalidation buffer}, $ib$, for each processor in the \IIE{} abstract machine (see Figure \ref{fig: wmm op model}).
$ib$ is an unbounded buffer of $\langle \mathrm{address, value} \rangle$ pairs, each representing a stale memory value for an address that can be observed by the processor.
Multiple stale values for an address in $ib$ are kept ordered by their staleness. 

The operations of the WMM abstract machine are similar to those of PSO except for the background operation and the load execution.
When the background operation moves a store from $sb$ to the \regfile{} memory, the original value in the \regfile{} memory, \ie{}, the stale value, enters the $ib$ of every other processor.
A load first searches the local $sb$.
If the address is not found in $sb$, it either reads the value in the \regfile{} memory or any stale value for the address in the local $ib$, the
choice between the two being nondeterministic.

The abstract machine operations maintain the following invariants:
once a processor \emph{observes} a store, it cannot observe any staler store for that address. 
Therefore, (1) when a store is executed, values for the store address in the local $ib$ are purged; (2) when a load is executed, values staler than the load result are flushed from the local $ib$; and (3) the background operation does not insert the stale value into the $ib$ of a processor if the $sb$ of the processor contains the address.

Just like introducing the $\ComInst$ fence in TSO, to prevent loads from reading the stale values in $ib$, we introduce the $\RecInst$ fence to clear the local $ib$. 
Figure~\ref{fig: wmm operations} summarizes the operations of the WMM abstract machine.

\begin{figure}[!htb]
    \begin{boxedminipage}{\columnwidth}
        \small
        \textbf{\wmmNmRule} (non-memory execution): Same as \tsoNmRule.
        \\
        \textbf{\wmmLdRule} (load execution) \\
        \emph{Predicate:} The next instruction of a processor is a load. \\
        \emph{Action:} Assume the load address is $a$.
        If $a$ is present in the $sb$ of the processor, then the load returns the value of the \emph{youngest} store for $a$ in the local $sb$.
        Otherwise, the load is executed in either of the following two ways (the choice is arbitrary):
        \begin{enumerate}
            \item The load returns the \regfile{} memory value $m[a]$, and all values for $a$ in the local $ib$ are removed.
            \item The load returns some value for $a$ in the local $ib$, and all values for $a$ {older} than the load result are removed from the local $ib$.
            (If there are multiple values for $a$ in $ib$, the choice of which one to read is arbitrary).
        \end{enumerate}
        \textbf{\wmmStRule} (store execution) \\
        \emph{Predicate:} The next instruction of a processor is a store. \\
        \emph{Action:} Assume the store address is $a$ and the store value is $v$.
        The processor inserts the store $\langle a,v\rangle$ into its $sb$, and removes all values for $a$ from its $ib$.
        \\
        \textbf{\wmmComRule} ($\ComInst$ execution): Same as \tsoComRule.
        \\
        \textbf{\wmmRecRule} (execution of a $\RecInst$ fence) \\
        \emph{Predicate:} The next instruction of a processor is a $\RecInst$. \\
        \emph{Action:} All values in the $ib$ of the processor are removed.
        \\
        \textbf{\wmmDeqSbRule} (background store buffer dequeue) \\
        \emph{Predicate:} The $sb$ of a processor is not empty.\\
        \emph{Action:} Assume the value of the {oldest} store for \emph{some} address $a$ in the $sb$ is $v$.
        First, the stale $\langle \mathrm{address, value}\rangle$ pair $\langle a, m[a]\rangle$ is inserted to the $ib$ of every other processor whose $sb$ does not contain $a$.
        Then this store is removed from $sb$, and $m[a]$ is set to $v$.
    \end{boxedminipage}
    \caption{Operations of the WMM abstract machine}\label{fig: wmm operations}
\end{figure}

\subsection{Properties of WMM}\label{sec: wmm litmus test}

Similar to TSO/PSO, WMM allows St-Ld and St-St reorderings because of $sb$ (Figures~\ref{fig: dekker} and \ref{fig: msg pass}).
To forbid the behavior in Figure~\ref{fig: dekker}, we need to insert a $\ComInst$ followed by a $\RecInst$ after the store in each processor.
$\RecInst$ is needed to prevent loads from getting stale values from $ib$.
The \IIE{} definition of WMM automatically forbids Ld-St reordering (Figure~\ref{fig: ld buffer}) and out-of-thin-air behaviors (Figure~\ref{fig: no thin air}).

\noindent\textbf{Ld-Ld reordering:}
WMM allows the behavior in Figure~\ref{fig: msg pass} due to St-St reordering.
Even if we insert a $\ComInst$ between the two stores in P1, the behavior is still allowed because $I_4$ can read the stale value 0 from $ib$.
This is as if the two loads in P2 were reordered.
Thus, we also need a $\RecInst$ between the two loads in P2 to forbid this behavior in WMM.

\noindent\textbf{No dependency ordering:}
WMM does not enforce any dependency ordering.
For example, WMM allows the behaviors of litmus tests in Figures~\ref{fig: mp+ctrl}, \ref{fig: mp+mem} and \ref{fig: mp+data} ($I_2$ should be $\ComInst$ in case of WMM), because the last load in P2 can always get the stale value 0 from $ib$ in each test.
Thus, it requires $\RecInst$ fences to enforce dependency ordering in WMM.
In particular, WMM can reorder the data-dependent loads (\ie, $I_4$ and $I_5$) in Figure~\ref{fig: mp+data}.

\noindent\textbf{Multi-copy atomic stores:}
Stores in WMM are multi-copy atomic, and WMM allows the behavior in Figure~\ref{fig: dekker extend} even when $\RecInst$ fences are inserted between Ld-Ld pairs $\langle I_2,I_3\rangle$ and $\langle I_5,I_6\rangle$.
This is because a store can be read by a load from the same processor while the store is in $sb$.
However, if the store is ever pushed from $sb$ to the \regfile{} memory, it becomes visible to all other processors simultaneously.
Thus, WMM forbids the behaviors in Figures~~\ref{fig: wrc}, \ref{fig: wwc} and \ref{fig: iriw} ($\LdLdFence$ should be $\RecInst$ in these tests).

\noindent\textbf{Per-location SC:}
WMM enforces per-location SC (Figure~\ref{fig: corr}), because both $sb$ and $ib$ enforce FIFO on same address entries.

\subsection{Axiomatic Definition of WMM}\label{sec: wmm axiom}

Based on the above properties of WMM, we give a simple axiomatic definition for WMM in Figure~\ref{fig: wmm axiomatic model} in the style of the axiomatic definitions of TSO and Alpha.
A $\True$ entry in the order-preserving table (Figure~\ref{fig: wmm order function}) indicates that if instruction $X$ precedes instruction $Y$ in the program order ($X\ProgOrd Y$) then the order must be maintained in the global memory order ($\MemOrd$).
$\MemOrd$ is a total order of all the memory and fence instructions from all processors. 
The notation $S\ReadFrom L$ means a load $L$ reads from a store $S$.
The notation $\max_{<mo} \{ \mathrm{set\ of\ stores} \}$ means the youngest store in the set according to $\MemOrd$.
The axioms are self-explanatory: the program order must be maintained if the order-preserving table says so, and a load must read from the youngest store among all stores that precede the load in either the memory order or the program order.
(See Appendix~\ref{sec:proof} for the equivalence proof of the axiomatic and \IIE{} definitions.)

These axioms also hold for Alpha with a slightly different order-preserving table, which marks the (Ld,St) entry as $a=b$.
(Alpha also merges $\ComInst$ and $\RecInst$ into a single fence).
However, allowing Ld-St reordering creates the possibility of out-of-thin-air behaviors, and Alpha uses an additional complicated axiom to disallow such behaviors~\cite[Chapter 5.6.1.7]{alpha1998}.
This axiom requires considering all possible execution paths to determine if a store is ordered after a load by dependency, while normal axiomatic models only examine a single execution path at a time.
Allowing Ld-St reordering also makes it difficult to define Alpha operationally.

\begin{figure}[!htb]
    \centering
    \subfloat[Axioms for WMM\label{fig: wmm axioms}]{
        \begin{boxedminipage}{\columnwidth}
            \small
            \textbf{Axiom \AxiInstOrd} (preserved instruction ordering):\\
            \centerline{$X \ProgOrd Y\ \wedge\ \orderFunc(X,Y) \Rightarrow\ X \MemOrd Y$}\\
            \textbf{Axiom \AxiLdVal} (the value of a load):\\
            \centerline{
                \begin{tabular}{l}
                $\StInst\ a\ v \ReadFrom \LdInst\ a\ \Rightarrow \hspace{2pt}\StInst\ a\ v =$ \\
                $\max_{<mo}\{\StInst\ a\ v' |\ \StInst\ a\ v'\MemOrd \LdInst\ a\ \vee\ \StInst\ a\ v'\ProgOrd\ \LdInst\ a \}$ \\
                \end{tabular}
            }
        \end{boxedminipage}
    }\\
    \subfloat[WMM order-preserving table, \ie{} $\orderFunc(X,Y)$ where $X\ProgOrd Y$
    \label{fig: wmm order function}]{
        \footnotesize
        \begin{tabular}{|p{10ex}|p{10ex}|p{10ex}|p{10ex}|p{10ex}|}
            \hline
            \backslashbox{X}{Y} & $\LdInst\ b$ & $\StInst\ b\ v'$ & $\RecInst$ & $\ComInst$ \\ \hline
            $\LdInst\ a$        & $a = b$      & $\True$          & $\True$    & $\True$     \\ \hline
            $\StInst\ a\ v$     & $\False$     & $a = b$          & $\False$   & $\True$     \\ \hline
            $\RecInst$          & $\True$      & $\True$          & $\True$    & $\True$     \\ \hline
            $\ComInst$          & $\False$     & $\True$          & $\True$    & $\True$     \\ \hline
        \end{tabular}
    }
    \caption{Axiomatic definition of WMM}\label{fig: wmm axiomatic model}
\end{figure}


\subsection{Compiling C++11 to WMM} \label{sec: wmm c++}

C++ primitives \cite{c++n4527} can be mapped to WMM instructions in an efficient way as shown in Figure~\ref{tab: c++ map}.
For the purpose of comparison, we also include a mapping to POWER~\cite{batty2012clarifying}. 

\begin{figure}[!htb]
	\centering
	\footnotesize
	\begin{tabular}{|l|l|l|}
		\hline
		C++ operations & WMM instructions & POWER instructions \\
		\hline
		Non-atomic Load  & $\LdInst$             & $\LdInst$ \\ \hline
		Load Relaxed     & $\LdInst$             & $\LdInst$ \\ \hline
		Load Consume     & $\LdInst;\ \RecInst$  & $\LdInst$ \\ \hline
		Load Acquire     & $\LdInst;\ \RecInst$  & $\LdInst;\ \cmpInst;\ \bcInst;\ \isyncInst$ \\ \hline
		Load SC          & $\ComInst;\ \RecInst;$ & $\syncInst;\ \LdInst;\ \cmpInst;$ \\ 
		                 & $\LdInst;\ \RecInst$  & $\bcInst;\ \isyncInst$ \\ \hline
		Non-atomic Store & $\StInst$             & $\StInst$ \\ \hline
		Store Relaxed    & $\StInst$             & $\StInst$ \\ \hline
		Store Release    & $\ComInst;\ \StInst$  & $\lwsyncInst;\ \StInst \!\!$ \\ \hline
		Store SC         & $\ComInst;\ \StInst$  & $\syncInst;\ \StInst$ \\
		\hline
	\end{tabular}
	\caption{Mapping C++ to WMM and POWER} \label{tab: c++ map}
\end{figure}

The $\ComInst;\RecInst$ sequence in WMM is the same as a  $\syncInst$ fence in POWER, and $\ComInst$ is similar to $\lwsyncInst$.
The $\cmpInst;\bcInst;\isyncInst$ sequence in POWER serves as a Ld-Ld fence, so it is similar to a $\RecInst$ fence in WMM.
In case of Store SC in C++, WMM uses a $\ComInst$ while POWER uses a $\syncInst$, so WMM effectively saves one $\RecInst$.
On the other hand, POWER does not need any fence for Load Consume in C++, while WMM requires a $\RecInst$.


Besides the C++ primitives, a common programming paradigm is the well-synchronized program, in which all critical sections are protected by locks.
To maintain SC behaviors for such programs in WMM, we can add a $\RecInst$ after acquiring the lock and a $\ComInst$ before releasing the lock.

For any program, if we insert a $\ComInst$ before every store and insert a $\ComInst$ followed by a $\RecInst$ before every load, then the program behavior in WMM is guaranteed to be sequentially consistent.
This provides a conservative way for inserting fences when performance is not an issue.

\section{WMM Implementation} \label{sec: wmm impl}

WMM can be implemented using conventional OOO multiprocessors, and even the most aggressive speculative techniques cannot step beyond WMM.
To demonstrate this, we describe an OOO implementation of WMM, and show simultaneously how the WMM model (\ie, the \IIE{} abstract machine) captures the behaviors of the implementation.
The implementation is described abstractly to skip unrelated details (\eg{}, ROB entry reuse). 
The implementation consists of $n$ OOO processors and a coherent write-back cache hierarchy which we discuss next.


\begin{figure}[!htb]
    \centering
    \includegraphics[width=0.8\columnwidth]{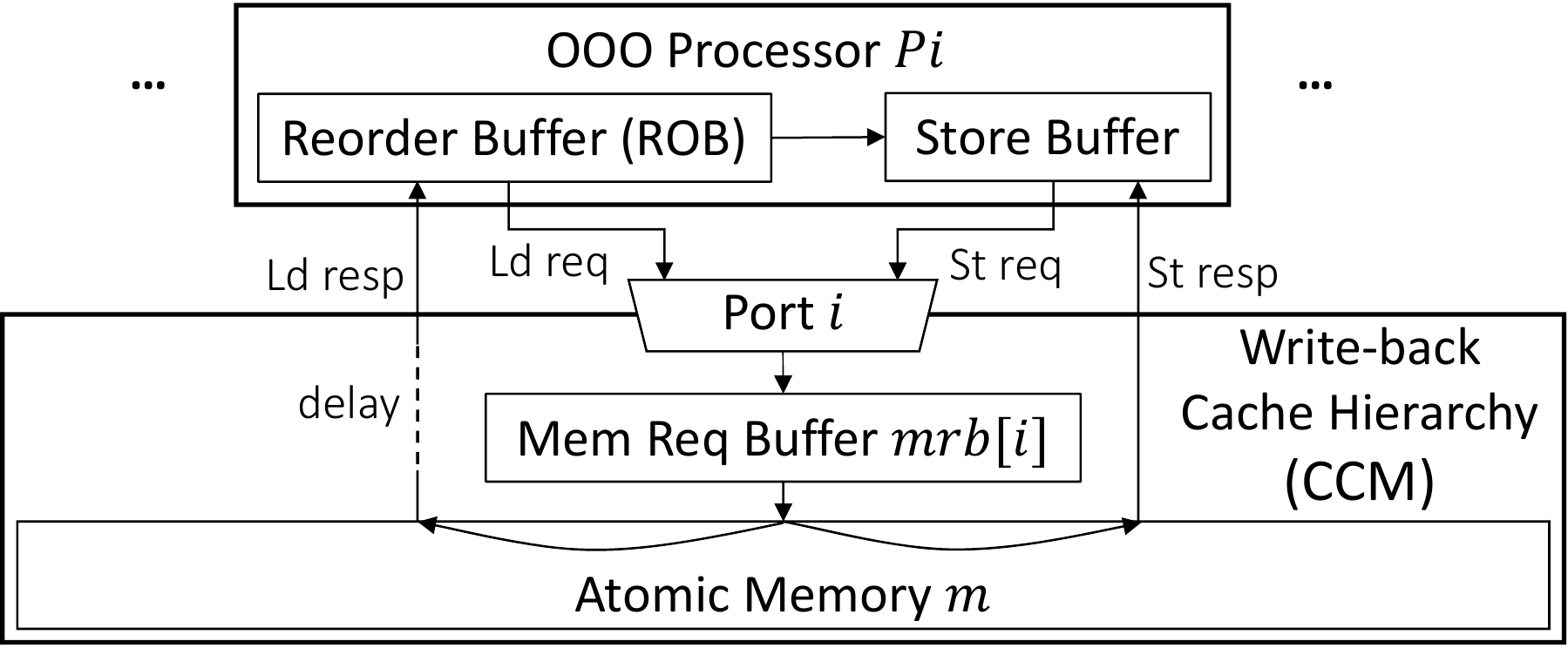}
    \caption{\ccmModelName+OOO: implementation of WMM} \label{fig: pow+ccm}
\end{figure}

\subsection{Write-Back Cache Hierarchy (\ccmModelName)}
We describe \ccmModelName{} as an abstraction of a conventional write-back cache hierarchy to avoid too much details.
In the following, we explain the function of such a cache hierarchy, abstract it to \ccmModelName, and relate \ccmModelName{} to the WMM model.

Consider a real $n$-ported write-back cache hierarchy with each port $i$ connected to processor $Pi$.
A request issued to port $i$ may be from a load instruction in the ROB of $Pi$ or a store in the store buffer of $Pi$. 
In conventional coherence protocols, all memory requests can be \emph{serialized}, \ie{}, each request can be considered as taking effect at some time point within its processing period~\cite{Vijayaraghavan2015}.
For example, consider the non-stalling MSI directory protocol in the Primer by Sorin \etal{}~\cite[Chapter 8.7.2]{sorin2011primer}.
In this protocol, a load request takes effect immediately if it hits in the cache; otherwise, it takes effect when it gets the data at the directory or a remote cache with M state.
A store request always takes effect at the time of writing the cache, \ie, either when it hits in the cache, or when it has received the directory response and all invalidation responses in case of miss.
We also remove the requesting store from the store buffer when a store request takes effect.
Since a cache cannot process multiple requests to the same address simultaneously, we assume requests to the same address from the same processor are processed in the order that the requests are issued to the cache.

\ccmModelName{} (Figure \ref{fig: pow+ccm}) abstracts the above cache hierarchy by operating as follows: every new request from port $i$ is inserted into a memory request buffer $mrb[i]$, which keeps requests to the same address in order; at any time we can remove the oldest request for an address from a $mrb$, let the request access the \regfile{} memory $m$, and either send the load result to ROB (which may experience a delay) or immediately dequeue the store buffer.
$m$ represents the coherent memory states.
Removing a request from $mrb$ and accessing $m$ captures the moment when the request takes effect.

It is easy to see that the \regfile{} memory in \ccmModelName{} corresponds to the \regfile{} memory in the WMM model, because they both hold the coherent memory values.
We will show shortly that how WMM captures the combination of \ccmModelName{} and OOO processors.
Thus any coherent protocol that can be abstracted as \ccmModelName{} can be used to implement WMM.

\subsection{Out-of-Order Processor (OOO)}
The major components of an OOO processor 
are the ROB and the store buffer (see Figure~\ref{fig: pow+ccm}).
Instructions are fetched into and committed from ROB in order; loads can be issued (\ie{}, search for data forwarding and possibly request \ccmModelName{}) as soon as its address is known; a store is enqueued into the store buffer only when the store commits (\ie, entries in a store buffer cannot be killed).
To maintain the per-location SC property of WMM, when a load $L$ is issued, it kills younger loads which have been issued but do not read from stores younger than $L$.
Next we give the correspondence between OOO and WMM.

\noindent\textbf{Store buffer:}
The state of the store buffer in OOO is represented by the $sb$ in WMM.
Entry into the store buffer when a store commits in OOO corresponds to the \wmmStRule{} operation.
In OOO, the store buffer only issues the oldest store for some address to \ccmModelName.
The store is removed from the store buffer when the store updates the \regfile{} memory in \ccmModelName.
This corresponds to the \wmmDeqSbRule{} operation.

\noindent\textbf{ROB and eager loads:}
Committing an instruction from ROB corresponds to executing it in WMM, and thus the architectural register state in both WMM and OOO must match at the time of commit.
Early execution of a load $L$ to address $a$ with a return value $v$ in OOO can be understood by considering where $\langle a,v \rangle$ resides in OOO when $L$ \emph{commits}.
Reading from $sb$ or \regfile{} memory $m$ in the \wmmLdRule{} operation covers the cases that $\langle a,v \rangle$ is, respectively, in the store buffer or the \regfile{} memory of \ccmModelName{} when $L$ commits.
Otherwise $\langle a,v \rangle$ is no longer present in \ccmModelName+OOO at the time of load commit and must have been overwritten in the \regfile{} memory of \ccmModelName.
This case corresponds to having performed the \wmmDeqSbRule{} operation to insert $\langle a,v \rangle$ into $ib$ previously, and now using the \wmmLdRule{} operation to read $v$ from $ib$.

\noindent\textbf{Speculations:}
OOO can issue a load speculatively by aggressive predictions, such as branch prediction (Figure~\ref{fig: mp+ctrl}), memory dependency prediction (Figure~\ref{fig: mp+mem}) and even load-value prediction (Figure~\ref{fig: mp+data}).
As long as all predictions related to the load eventually turn out to be correct, the load result got from the speculative execution can be preserved.
\emph{No further check is needed}.
Speculations effectively reorder dependent instructions, \eg{}, load-value speculation reorders data-dependent loads.
Since WMM does not require preserving any dependency ordering, speculations will neither break WMM nor affect the above correspondence between OOO and WMM.

\noindent\textbf{Fences:}
Fences never go into store buffers or \ccmModelName{} in the implementation.
In OOO, a $\ComInst$ can commit from ROB only when the local store buffer is empty.
$\RecInst$ plays a different role; at the time of commit it is a NOP, but while it is in the ROB, it stalls all younger loads (unless the load can bypass directly from a store which is younger than the $\RecInst$).
The stall prevents younger loads from reading values that would become stale when the $\RecInst$ commits.
This corresponds to clearing $ib$ in WMM.

\noindent\textbf{Summary:}
For any execution in the \ccmModelName+OOO implementation, we can operate the WMM model following the above correspondence.
Each time \ccmModelName+OOO commits an instruction $I$ from ROB or dequeues a store $S$ from a store buffer to memory, the \regfile{} memory of \ccmModelName, store buffers, and the results of committed instructions in \ccmModelName+OOO are exactly the same as those in the WMM model when the WMM model executes $I$ or dequeues $S$ from $sb$, respectively.

\section{Performance Evaluation of WMM} \label{sec: ld-st reorder}

We evaluate the performance of implementations of WMM, Alpha, SC and TSO. 
All implementations use OOO cores and coherent write-back cache hierarchy.
Since Alpha allows Ld-St reordering, the comparison of WMM and Alpha will show whether such reordering affects performance.

\subsection{Evaluation Methodology}
We ran SPLASH-2x benchmarks~\cite{woo1995splash,splash2x} on an 8-core multiprocessor using the ESESC simulator~\cite{ardestani2013esesc}.
We ran all benchmarks except ocean\_ncp, which allocates too much memory and breaks the original simulator.
We used sim-medium inputs except for cholesky, fft and radix, where we used sim-large inputs.
We ran all benchmarks to completion without sampling.

The configuration of the 8-core multiprocessor is shown in Figures~\ref{tab: sys config} and \ref{tab: core config} .
We do not use load-value speculation in this evaluation.
The Alpha implementation can mark a younger store as committed when instruction commit is stalled, as long as the store can never be squashed and the early commit will not affect single-thread correctness.
A committed store can be issued to memory or merged with another committed store in WMM and Alpha.
SC and TSO issue loads speculatively and monitor L1 cache evictions to kill speculative loads that violate the consistency model.
We also implement store prefetch as an optional feature for SC and TSO;
We use \textbf{SC-pf} and \textbf{TSO-pf} to denote the respective implementations with store prefetch.

\begin{figure}[!htb]
    \centering
    \begin{minipage}{\columnwidth}
        \centering
        \footnotesize
        \begin{tabular}{|l|l|}
            \hline
            Cores & 8 cores (@2GHz) with private L1 and L2 caches  \\
            \hline
            L3 cache & 4MB shared, MESI coherence, 64-byte cache line \\
            & 8 banks, 16-way, LRU replacement, max 32 req per bank  \\
            & 3-cycle tag, 10-cycle data (both pipelined) \\
            & 5 cycles between cache bank and core (pipelined) \\
            \hline
            Memory & 120-cycle latency, max 24 requests \\
            \hline
        \end{tabular}
        \caption{Multiprocessor system configuration} \label{tab: sys config}
    \end{minipage}
    \vspace{5pt}\\
    \begin{minipage}{\columnwidth}
        \centering
        \footnotesize
        \begin{tabular}{|l|l|}
            \hline
            Frontend & fetch + decode + rename,  7-cycle pipelined latency in all \\
            & 2-way superscalar, hybrid branch predictor \\
            \hline
            ROB & 128 entries, 2-way issue/commit \\ 
            \hline
            Function  & 2 ALUs, 1 FPU, 1 branch unit, 1 load unit, 1 store unit \\
            units &   32-entry reservation station per unit \\
            \hline
            Ld queue & Max 32 loads   \\ 
            \hline
            St queue  & Max 24 stores, containing speculative and committed stores\hspace{-3pt} \\
            \hline
            L1 D  & 32KB private, 1 bank, 4-way, 64-byte cache line \\ 
            cache & LRU replacement, 1-cycle tag, 2-cycle data (pipelined) \\
            & Max 32 upgrade and 8 downgrade requests \\
            \hline
            L2 cache & 128KB private, 1 bank, 8-way, 64-byte cache line \\
            & LRU replacement, 2-cycle tag, 6-cycle data (both pipelined) \\
            & Max 32 upgrade and 8 downgrade requests \\
            \hline
        \end{tabular}
        \caption{Core configuration} \label{tab: core config}
    \end{minipage}
\end{figure}

\subsection{Simulation Results}

\begin{figure*}[!t]
    \centering
    \includegraphics[width=\textwidth]{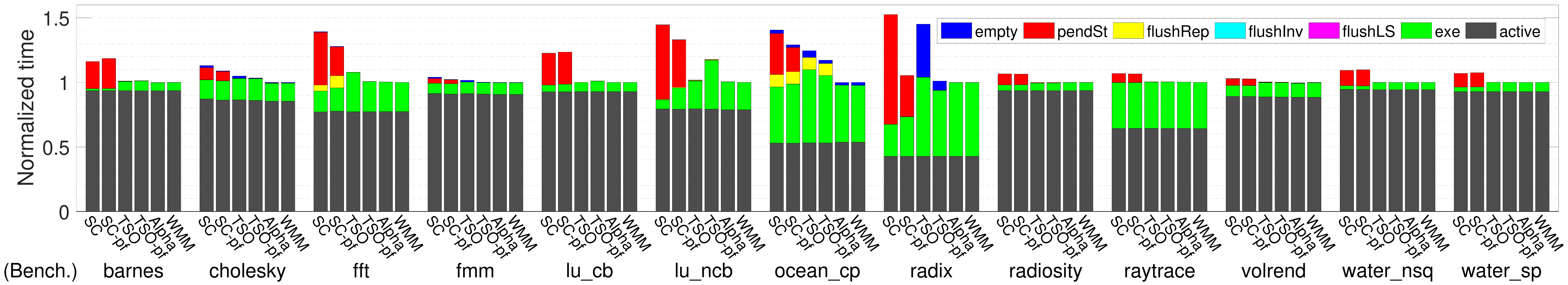}
    \caption{Normalized execution time and its breakdown at the commit slot of ROB}\label{fig: splash retire BW}
\end{figure*}

A common way to study the performance of memory models is to monitor the commit of instructions at the commit slot of ROB (\ie, the oldest ROB entry).
Here are some reasons why an instruction may not commit in a given cycle:
\begin{itemize}
    \item \textbf{empty}: The ROB is empty.
    \item \textbf{exe}: The instruction at the commit slot is still executing.
    \item \textbf{pendSt}: The load (in SC) or $\ComInst$ (in TSO, Alpha and WMM) cannot commit due to pending older stores.
    \item \textbf{flushLS}: ROB is being flushed because a load is killed by another older load (only in WMM and Alpha) or older store (in all models) to the same address.
    \item \textbf{flushInv}: ROB is being flushed after cache invalidation caused by a remote store (only in SC or TSO).
    \item \textbf{flushRep}: ROB is being flushed after cache replacement (only in SC or TSO).
\end{itemize}
Figure \ref{fig: splash retire BW} shows the execution time (normalized to WMM) and its breakdown at the commit slot of ROB. 
The total height of each bar represents the normalized execution time, and stacks represent different types of stall times added to the active committing time at the commit slot.

\noindent\textbf{WMM versus SC:}
WMM is much faster than both SC and SC-pf for most benchmarks, because a pending older store in the store queue can block SC from committing loads.

\noindent\textbf{WMM versus TSO:}
WMM never does worse than TSO or TSO-pf, and in some cases it shows up to 1.45$\times$ speedup over TSO (in radix) and 1.18$\times$ over TSO-pf (in lu\_ncb).
There are two disadvantages of TSO compared to WMM.
First, load speculation in TSO is subject to L1 cache eviction, \eg, in benchmark ocean\_cp.
Second, TSO requires prefetch to reduce store miss latency, \eg, a full store queue in TSO stalls issue to ROB and makes ROB empty in benchmark radix.
However, prefetch may sometimes degrade performance due to interference with load execution, \eg, TSO-pf has more commit stalls due to unfinished loads in benchmark lu\_ncb.

\noindent\textbf{WMM versus Alpha:}
Figure~\ref{fig: early st} shows the average number of cycles that a store in Alpha can commit before it reaches the commit slot.
However, the early commit (\ie, Ld-St reordering) does not make Alpha outperform WMM (see Figure~\ref{fig: splash retire BW}), because store buffers can already hide the store miss latency.
Note that ROB is typically implemented as a FIFO (\ie, a circular buffer) for register renaming (\eg{}, freeing physical registers in order), precise exceptions, \etc{}
Thus, if the early committed store is in the middle of ROB, its ROB entry cannot be reused by a newly fetched instruction, \ie{}, the effective size of the ROB will not increase.
In summary, the Ld-St reordering in Alpha does not increase performance but complicates the definition (Section~\ref{sec: wmm axiom}).

\begin{figure}[!htb]
    \centering
    \includegraphics[width=\columnwidth]{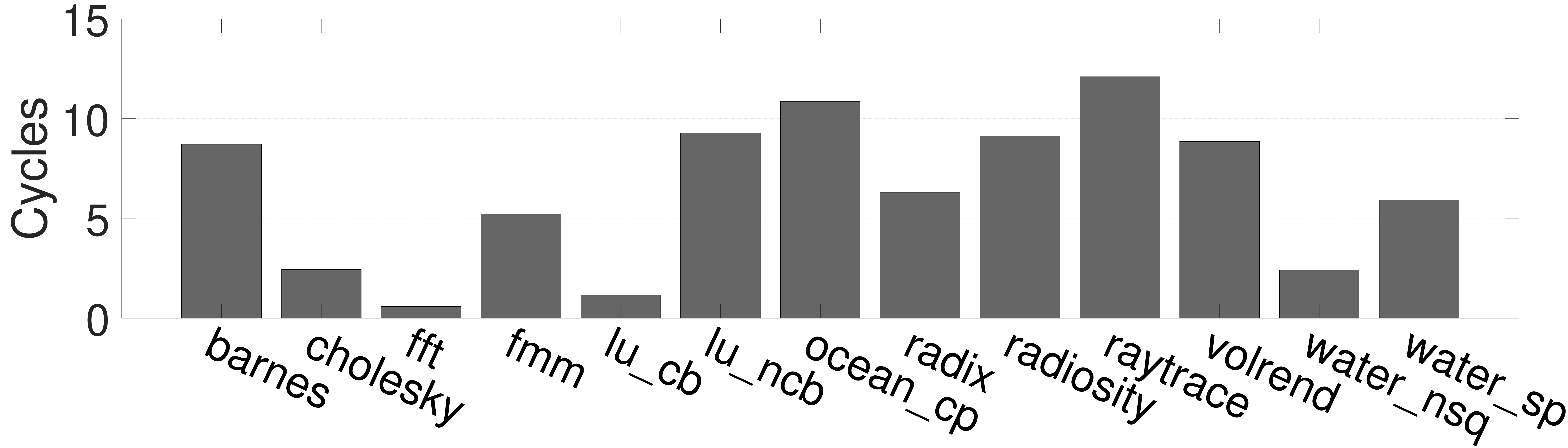}
    \caption{Average cycles to commit stores early in Alpha}\label{fig: early st}
\end{figure}

\section{\wmmSSB{} Model} \label{sec: non atomic mem}

Unlike the multi-copy atomic stores in WMM, stores in some processors (\eg, POWER) are non-atomic due to shared write-through caches or shared store buffers.
If multiple processors share a store buffer or write-through cache, a store by any of these processors may be seen by all these processors before other processors.
Although we could tag stores with processor IDs in the store buffer, it is infeasible to separate values stored by different processors in a cache.

In this section, we introduce a new \IIE{} model, \wmmSSB, which captures the non-atomic store behaviors in a way independent from the sharing topology.
\wmmSSB{} is derived from WMM by adding a new background operation.
We will show later in Section~\ref{sec: ssb impl} why \wmmSSB{} can be implemented using memory systems with non-atomic stores.

\subsection{I\textsuperscript{2}E Definition of \wmmSSB} \label{sec: st propagate}

The structure of the abstract machine of \wmmSSB{} is the same as that of WMM.
To model non-atomicity of stores, \ie, to make a store by one processor readable by another processor before the store updates the \regfile{} memory, \wmmSSB{} introduces a new background operation that copies a store from one store buffer into another.
However, we need to ensure that all stores for an address can still be put in a total order (i.e., the coherence order), and the order seen by any processor is consistent with this total order (\ie, per-location SC).

To identify all the copies of a store in various store buffers, we assign a unique tag $t$ when a store is executed (by being inserted into $sb$), and this tag is copied when a store is copied from one store buffer to another.
When a background operation dequeues a store from a store buffer to the memory, all its copies must be deleted from all the store buffers which have them. 
This requires that all copies of the store are the oldest for that address in their respective store buffers.

All the stores for an address in a store buffer can be strictly ordered as a list, where the \emph{youngest} store is the one that entered the store buffer \emph{last}. 
We make sure that all ordered lists (of all store buffers) can be combined transitively to form a partial order (\ie, no cycle), which has now to be understood in terms of the tags on stores because of the copies.
We refer to this partial order as the \emph{partial coherence order} ($\coOrd$), because it is consistent with the coherence order.

Consider the states of store buffers shown in Figure \ref{fig: store propagate example} (primes are copies).
$A$, $B$, $C$ and $D$ are different stores to the same address, and their tags are $t_A$, $t_B$, $t_C$ and $t_D$, respectively.
$A'$ and $B'$ are copies of $A$ and $B$ respectively created by the background copy operation.
Ignoring $C'$, the partial coherence order contains: $t_D \coOrd t_B \coOrd t_A$ ($D$ is older than $B$, and $B$ is older than $A'$ in P2), and $t_C \coOrd t_B$ ($C$ is older than $B'$ in P3).
Note that $t_D$ and $t_C$ are not related here.

At this point, if we copied $C$ in P3 as $C'$ into P1, we would add a new edge $t_A \coOrd t_C$, breaking the partial order by introducing the cycle $t_A \coOrd t_C \coOrd t_B \coOrd t_A$.
Thus copying of $C$ into P1 should be forbidden in this state.
Similarly, copying a store with tag $t_A$ into P1 or P2 should be forbidden because it would immediately create a cycle: $t_A \coOrd t_A$.
In general, the background copy operation must be constrained so that the partial coherence order is still acyclic after copying.

\begin{figure}[!htb]
	\centering
	\includegraphics[width=0.7\columnwidth]{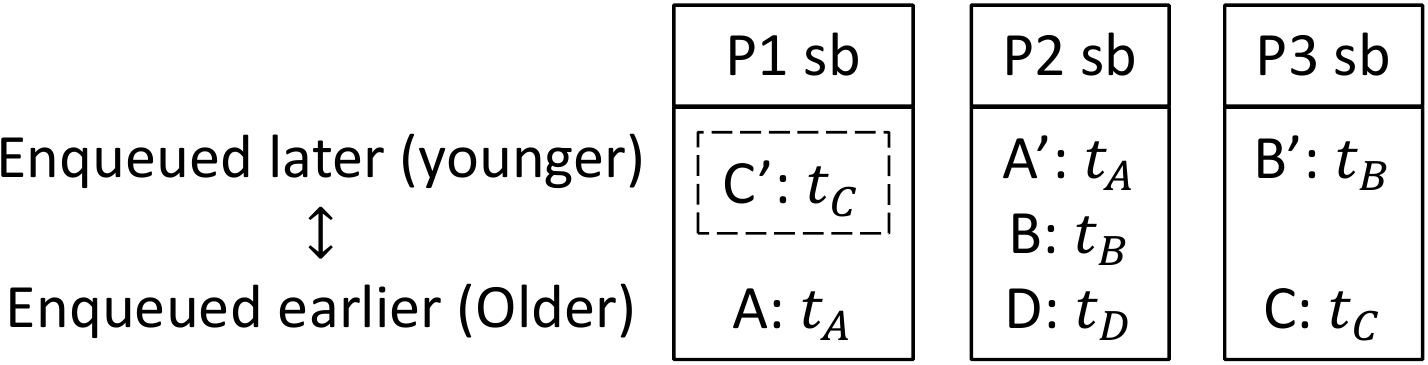}
	\caption{Example states of store buffers} \label{fig: store propagate example}
\end{figure}

Figure~\ref{fig: wmm-s op} shows the background operations of the \wmmSSB{} abstract machine.
The operations that execute instructions in \wmmSSB{} are the same as those in WMM, so we do not show them again.
(The store execution operation in \wmmSSB{} needs to also insert the tag of the store into $sb$).

\begin{figure}[!htb]
    \begin{boxedminipage}{\columnwidth}
        \small
        \textbf{\wmmDeqSbRule} (background store buffer dequeue) \\
        \emph{Predicate:} There is a store $S$ in a store buffer, and all copies of $S$ are the oldest store for that address in their respective store buffers.\\
        \emph{Action:} Assume the $\langle \mathrm{address, value, tag}\rangle$ tuple of store $S$ is $\langle a,v,t\rangle$.
        First, the stale $\langle \mathrm{address, value}\rangle$ pair $\langle a, m[a]\rangle$ is inserted to the $ib$ of every processor whose $sb$ does not contain $a$.
        Then all copies of $S$ are removed from their respective store buffers, and the \regfile{} memory $m[a]$ is updated to $v$.
        \\
        \textbf{\wmmSSBPropSt} (background store copy) \\
        \emph{Predicate:} There is a store $S$ that is in the $sb$ of some processor $i$ but not in the $sb$ of some other processor $j$.
        Additionally, the partial coherence order will still be acyclic if we insert a copy of $S$ into the $sb$ of processor $j$.\\
        \emph{Action:} Insert a copy of $S$ into the $sb$ of processor $j$, and remove all values for the store address of $S$ from the $ib$ of processor $j$.
    \end{boxedminipage}
    \caption{Background operations of \wmmSSB{}}\label{fig: wmm-s op}
\end{figure}

\noindent\textbf{Binding background copy with load execution:}
If the \wmmSSBPropSt{} operation is restricted to always happen right before a load execution operation that reads from the newly created copy, it is not difficult to prove that the \wmmSSB{} model remains the same, \ie, legal behaviors do not change.
In the rest of the paper, we will only consider this ``restricted'' version of \wmmSSB{}.
In particular,  all \wmmSSBPropSt{} operations in the following analysis of litmus tests fall into this pattern.

\subsection{Properties of \wmmSSB}

\wmmSSB{} enforces per-location SC (Figure~\ref{fig: corr}), because it prevents cycles in the order of stores to the same address.
It also allows the same instruction reorderings as WMM does (Figure~\ref{fig: reorder litmus}).
We focus on the store non-atomicity of \wmmSSB.

\noindent\textbf{Non-atomic stores and cumulative fences:}
Consider the litmus tests for non-atomic stores in Figures~\ref{fig: wrc}, \ref{fig: wwc} and \ref{fig: iriw} ($\LdLdFence$ should be $\RecInst$ in these tests).
\wmmSSB{} allows the behavior in Figure~\ref{fig: wrc} by copying $I_1$ into the $sb$ of P2 and then executing $I_2,I_3,I_4,I_5,I_6$ sequentially.
$I_1$ will not be dequeued from $sb$ until $I_6$ returns value 0.
To forbid this behavior, a $\ComInst$ is required between $I_2$ and $I_3$ in P2 to push $I_1$ into memory.
Similarly, \wmmSSB{} allows the behavior in Figure~\ref{fig: wwc} (\ie{}, we copy $I_1$ into the $sb$ of P2 to satisfy $I_2$, and $I_1$ is dequeued after $I_5$ has updated the \regfile{} memory), and we need a $\ComInst$ between $I_2$ and $I_3$ to forbid the behavior.
In both litmus tests, the inserted fences have a cumulative global effect in ordering $I_1$ before $I_3$ and the last instruction in P3.

\wmmSSB{} also allows the behavior in Figure~\ref{fig: iriw} by copying $I_1$ into the $sb$ of P2 to satisfy $I_2$, and copying $I_5$ into the $sb$ of P4 to satisfy $I_6$.
To forbid the behavior, we need to add a $\ComInst$ right after the first load in P2 and P4 (but before the $\LdLdFence/\RecInst$ that we added to stop Ld-Ld reordering).
As we can see, $\ComInst$ and $\RecInst$ are similar to release and acquire respectively. 
Cumulation is achieved by globally advertising observed stores ($\ComInst$) and preventing later loads from reading stale values ($\RecInst$).

\noindent\textbf{Programming properties:}
\wmmSSB{} is the same as WMM in the properties described in Section~\ref{sec: wmm c++},
including the compilation of C++ primitives, maintaining SC for well-synchronized programs, and the conservative way of inserting fences.

\section{\wmmSSB{} Implementations} \label{sec: ssb impl}



Since \wmmSSB{} is strictly more relaxed than WMM, any WMM implementation is a valid \wmmSSB{} implementation.
However, we are more interested in implementations with non-atomic memory systems.
Instead of discussing each specific system one by one, we explain how \wmmSSB{} can be implemented using the ARMv8 flowing model, which is a general abstraction of non-atomic memory systems~\cite{flur2016modelling}.
We first describe the adapted flowing model (\flowModelName) which uses fences in \wmmSSB{} instead of ARM, and then explain how it obeys WMM-S.

\subsection{The Flowing Model (\flowModelName)}

FM consists of a tree of segments $s[i]$ rooted at the \regfile{} memory $m$.
For example, Figure~\ref{fig: OOO+FM} shows four OOO processors (P1$\ldots$P4) connected to a 4-ported FM which has six segments ($s[1\ldots 6]$).
Each segment is a list of memory requests, (\eg, the list of blue nodes in $s[6]$, whose head is at the bottom and the tail is at the top).


OOO interacts with \flowModelName{} in a slightly different way than \ccmModelName.
Every memory request from a processor is appended to the tail of the list of the segment connected to the processor (\eg, $s[1]$ for P1).
OOO no longer contains a store buffer; after a store is committed from ROB, it is directly sent to \flowModelName{} and there is no store response.
When a $\ComInst$ fence reaches the commit slot of ROB, the processor sends a $\ComInst$ request to \flowModelName, and the ROB will not commit the $\ComInst$ fence until \flowModelName{} sends back the response for the $\ComInst$ request.

Inside \flowModelName, there are three background operations:
(1) Two requests in the same segment can be reordered in certain cases;
(2) A load can bypass from a store in the same segment;
(3) The request at the head of the list of a segment can flow into the parent segment (\eg, flow from $s[1]$ into $s[5]$) or the \regfile{} memory (in case the parent of the segment, \eg, $s[6]$, is $m$).
Details of these operations are shown in Figure~\ref{fig: fm-rule}.

\begin{figure}[!htb]
    \begin{boxedminipage}{\columnwidth}
        \small
        \textbf{\flowReorderRule} (reorder memory requests) \\
        \emph{Predicate:} The list of segment $s[i]$ contains two consecutive requests $r_{new}$ and $r_{old}$ ($r_{new}$ is above $r_{old}$ in $s[i]$); and \emph{neither} of the following is true:
        \begin{enumerate}
            \item $r_{new}$ and $r_{old}$ are memory accesses to the same address.
            \item $r_{new}$ is a $\ComInst$ and $r_{old}$ is a store.
        \end{enumerate}
        \emph{Action:} Reorder $r_{new}$ and $r_{old}$ in the list of $s[i]$.
        \\
        \textbf{\flowBypassRule} (store forwarding) \\
        \emph{Predicate:}  The list of segment $s[i]$ contains two consecutive requests $r_{new}$ and $r_{old}$ ($r_{new}$ is above $r_{old}$ in $s[i]$).
        $r_{new}$ is a load, $r_{old}$ is a store, and they are for the same address.\\
        \emph{Action:} we send the load response for $r_{new}$ using the store value of $r_{old}$, and remove $r_{new}$ from the segment.
        \\
        \textbf{\flowFlowRule} (flow request) \\
        \emph{Predicate:} A segment $s[i]$ is not empty. \\
        \emph{Action:} Remove the request $r$ which is the head of the list of $s[i]$.
        If the parent of $s[i]$ in the tree structure is another segment $s[j]$, we append $r$ to the tail of the list of $s[j]$.
        Otherwise, the parent of $s[i]$ is $m$, and we take the following actions according to the type of $r$:
        \begin{itemize}
            \item If $r$ is a load, we send a load response using the value in $m$.
            \item If $r$ is a store $\langle a,v\rangle$, we update $m[a]$ to $v$.
            \item If $r$ is a $\ComInst$, we send a response to the requesting processor and the $\ComInst$ fence can then be committed from ROB.
        \end{itemize}
    \end{boxedminipage}
    \caption{Background operations of \flowModelName}\label{fig: fm-rule}
\end{figure}

It is easy to see that \flowModelName{} abstracts non-atomic memory systems, \eg, Figure~\ref{fig: OOO+FM} abstracts a system in which P1 and P2 share a write-through cache while P3 and P4 share another.

\noindent\textbf{Two properties of \flowModelName+OOO:}
First, \flowModelName+OOO enforces per-location SC because the segments in \flowModelName{} never reorder requests to the same address.
Second, stores for the same address, which lie on the path from a processor to $m$ in the tree structure of \flowModelName, are strictly ordered based on their distance to the tree root $m$; and the combination of all such orderings will not contain any cycle.
For example, in Figure~\ref{fig: OOO+FM}, stores in segments $s[3]$ and $s[6]$ are on the path from P3 to $m$; a store in $s[6]$ is older than any store (for the same address) in $s[3]$, and stores (for the same address) in the same segment are ordered from bottom to top (bottom is older).

\subsection{Relating \flowModelName+OOO to \wmmSSB}\label{sec: ssb hw = model}

\wmmSSB{} can capture the behaviors of any program execution in implementation \flowModelName+OOO in almost the same way that WMM captures the behaviors of \ccmModelName+OOO.
When a store updates the \regfile{} memory in \flowModelName+OOO, \wmmSSB{} performs a \wmmSSBDeqSbRule{} operation to dequeue that store from store buffers to memory.
When an instruction is committed from a ROB in \flowModelName+OOO, \wmmSSB{} executes that instruction.
The following invariants hold after each operation in \flowModelName+OOO and
the corresponding operation in \wmmSSB{}:
\begin{enumerate}
    \item For each instruction committed in \flowModelName+OOO, the execution results in \flowModelName+OOO and \wmmSSB{} are the same.
    \item The \regfile{} memories in \flowModelName+OOO and \wmmSSB{} match.
    \item The $sb$ of each processor $Pi$ in \wmmSSB{} holds exactly all the stores in \flowModelName+OOO that is \emph{observed by the commits of $Pi$} but have not updated the \regfile{} memory.
    (A store is observed by the commits of $Pi$ if it has been either committed by $Pi$ or returned by a load that has been committed by $Pi$).
    \item The order of stores for the same address in the $sb$ of any processor in \wmmSSB{} is exactly the order of those stores on the path from $Pi$ to $m$ in \flowModelName+OOO.
\end{enumerate}
It is easy to see how the invariants are maintained when the \regfile{} memory is updated or a non-load instruction is committed in \flowModelName+OOO.
To understand the commit of a load $L$ to address $a$ with result $v$ in processor $Pi$ in \flowModelName+OOO, we still consider where $\langle a,v \rangle$ resides when $L$ commits.
Similar to WMM, reading \regfile{} memory $m$ or local $ib$ in the load execution operation of \wmmSSB{} covers the cases that $\langle a,v\rangle$ is still in the \regfile{} memory of \flowModelName{} or has already been overwritten by another store in the \regfile{} memory of \flowModelName, respectively.
In case $\langle a, v \rangle$ is a store that has not yet updated the \regfile{} memory in \flowModelName, $\langle a, v \rangle$ must be on the path from $Pi$ to $m$.
In this case, if $\langle a,v\rangle$ has been observed by the commits of $Pi$ before $L$ is committed, then $L$ can be executed by reading the local $sb$ in \wmmSSB{}.
Otherwise, on the path from $Pi$ to $m$, $\langle a,v \rangle$ must be younger than any other store observed by the commits of $Pi$.
Thus, \wmmSSB{} can copy $\langle a, v\rangle$ into the $sb$ of $Pi$ without breaking any invariant.
The copy will not create any cycle in $\coOrd$ because of invariants 3 and 4 as well as the second property of \flowModelName+OOO mentioned above.
After the copy, \wmmSSB{} can have $L$ read $v$ from the local $sb$.

\noindent\textbf{Performance comparison with ARM and POWER:}
As we have shown that \wmmSSB{} can be implemented using the generalized memory system of ARM, we can turn an ARM multicore into a \wmmSSB{} implementation by stopping Ld-St reordering in the ROB.
Since Section~\ref{sec: ld-st reorder} already shows that Ld-St reordering does not affect performance, we can conclude qualitatively that there is  no discernible performance difference between \wmmSSB{} and ARM implementations.
The same arguments apply to the comparison against POWER and RC.

\begin{figure}[!htb]
    \centering
    \begin{minipage}[b]{0.47\columnwidth}
        \includegraphics[width=\textwidth]{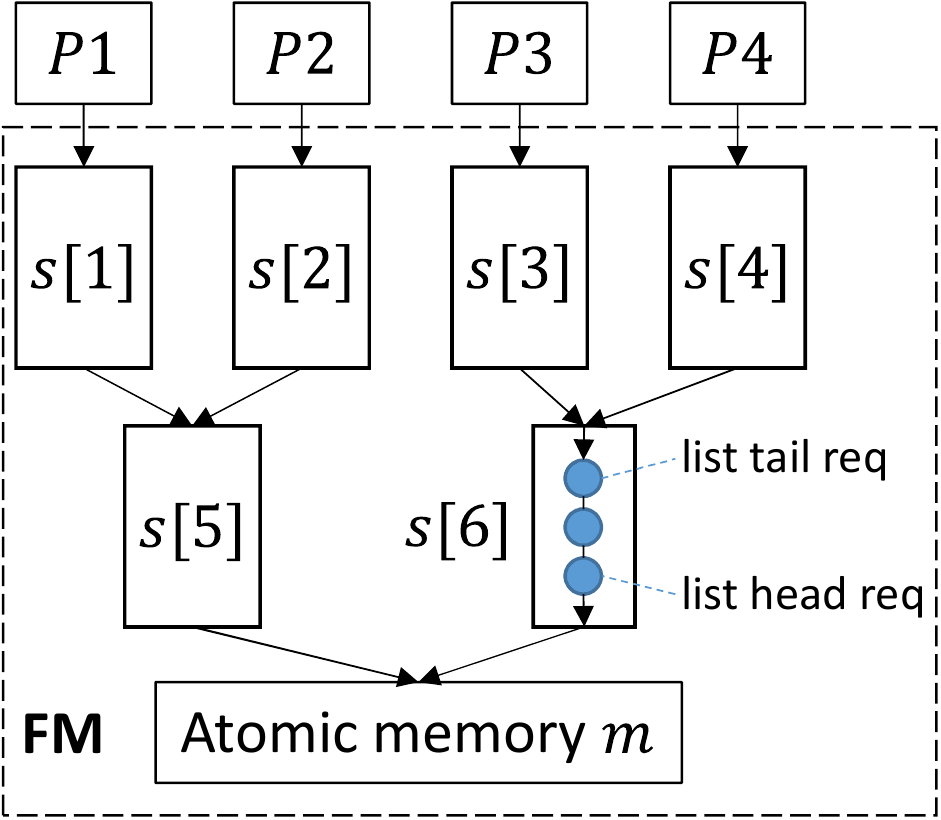}
        \vspace{-20pt}
        \caption{OOO+FM} \label{fig: OOO+FM}
    \end{minipage}
    \hspace{2pt}
    \begin{minipage}[b]{0.48\columnwidth}
        \centering
        \footnotesize
        \begin{tabular}{|l|l|}
            \hline
            Proc. P1 & Proc P2 \\
            \hline
            $\!\!\! I_1: \StInst\ a\ 1 \!\!\!$ & $\!\!\! I_4: r_1=\LdInst\ b \!\!\!$ \\
            $\!\!\! I_2: \MBInst \!\!\!$       & $\!\!\! I_5: \mathsf{if}(r_1 \!\neq\! 1)\ \mathsf{exit} \!\!\!$ \\
            $\!\!\! I_3: \StInst\ b\ 1 \!\!\!$ & $\!\!\! I_6: \StInst\ c\ 1 \!\!\!$ \\
                                               & $\!\!\! I_7: r_2=\LdInst\ c \!\!\!$ \\
                                               & $\!\!\! I_8: r_3 \!=\! a \!+\! r_2 \!-\! 1 \!\!\!$ \\
                                               & $\!\!\! I_9: r_4=\LdInst\ r_3 \!\!\!$ \\
            \hline
            \multicolumn{2}{|l|}{RMO forbids: $r_1=1,\ r_2=1$} \\
            \multicolumn{2}{|l|}{$r_3=a,\ r_4=0$} \\
            \hline
        \end{tabular}
        \caption{RMO dependency order} \label{fig: rmo dep order}
    \end{minipage}
\end{figure}

\section{Problems of RC and RMO} \label{sec: alpha rmo rc bad}

Here we elaborate the problems of RC (both $\mathrm{RC_{sc}}$ and $\mathrm{RC_{pc}}$) and RMO, which have been pointed out in Section~\ref{sec: intro}.

\noindent\textbf{RC:}
Although the RC definition~\cite{gharachorloo1990memory} allows the behaviors of WRC and IRIW (Figures~\ref{fig: wrc} and \ref{fig: iriw}), it disallows the behavior of WWC (Figure~\ref{fig: wwc}).
In WWC, when $I_2$ reads the value of store $I_1$, the RC definition says that $I_1$ is performed with respect to (w.r.t) P2.
Since store $I_5$ has not been issued due to the data dependencies in P2 and P3, $I_1$ must be performed w.r.t P2 before $I_5$.
The RC definition says that ``all writes to the same location are serialized in some order and are performed in that order with respect to any processor''~\cite[Section 2]{gharachorloo1990memory}.
Thus, $I_1$ is before $I_5$ in the serialization order of stores for address $a$, and the final memory value of $a$ cannot be 2 (the value of $I_1$), \ie, RC forbids the behavior of WWC and thus forbids shared write-through caches in implementations.

\noindent\textbf{RMO:}
The RMO definition~\cite[Section D]{weaver1994sparc} is incorrect in enforcing dependency ordering.
Consider the litmus test in Figure~\ref{fig: rmo dep order} ($\MBInst$ is the fence in RMO).
In P2, the execution of $I_6$ is conditional on the result of $I_4$, $I_7$ loads from the address that $I_6$ stores to, and $I_9$ uses the results of $I_7$.
According the definition of dependency ordering in RMO~\cite[Section D.3.3]{weaver1994sparc}, $I_9$ depends on $I_4$ transitively.
Then the RMO axioms~\cite[Section D.4]{weaver1994sparc} dictate that $I_9$ must be after $I_4$ in the memory order, and thus forbid the behavior in Figure~\ref{fig: rmo dep order}.
However, this behavior is possible in hardware with speculative load execution and store forwarding, \ie, $I_7$ first speculatively bypasses from $I_6$, and then $I_9$ executes speculatively to get 0.
Since most architects will not be willing to give up on these two optimizations, \RISCV{} should not adopt RMO.

\section{Conclusion} \label{sec: conclude}

We have proposed two weak memory models, WMM and \wmmSSB{}, for \RISCV{} with different tradeoffs between definitional simplicity and implementation flexibility.
However RISC-V can have only one memory model.
Since there is no obvious evidence that restricting to multi-copy atomic stores affects performance or increases hardware complexity, RISC-V should adopt WMM in favor of simplicity.

\section{Acknowledgment}
We thank all the anonymous reviewers on the different versions of this paper over the two years.
We have also benefited from the discussions with Andy Wright, Thomas Bourgeat, Joonwon Choi, Xiangyao Yu, and Guowei Zhang.
This work was done as part of the Proteus project under the DARPA BRASS Program (grant number 6933218).

\appendices
\section{Proof of Equivalence between WMM $\IIE$ Model and WMM Axiomatic Model}\label{sec:proof}

\newcommand{\MOIIE}{<_{mo\Hyphen{}i2e}}
\newcommand{\POIIE}{<_{po\Hyphen{}i2e}}
\newcommand{\RFIIE}{\xrightarrow{r\!f\Hyphen{}i2e}}
\newcommand{\ExOrd}[1]{<_{e\Hyphen{}#1}}

\newcommand{\SrcSt}{\mathsf{source}}
\newcommand{\OverwriteSt}{\mathsf{overwrite}}
\newcommand{\Addr}{\mathsf{addr}}

Here we present the equivalence proof for the $\IIE$ definition and the axiomatic definition of WMM.

\begin{theorem}[Soundness]\label{thm:wmm:i2e<axi}
    WMM $\IIE{}$ model $\subseteq$ WMM axiomatic model.
\end{theorem}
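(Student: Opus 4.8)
The plan is to convert any terminating run of the WMM $\IIE$ abstract machine into a witness for the axiomatic model and then verify the two axioms. First I would \emph{complete} the run by appending $\wmmDeqSbRule$ steps until every $sb$ is empty; this adds no instructions and changes no already-committed result, and it gives every store a definite moment at which it updates $m$. I would also attach to every $sb$ entry, every cell of $m$, and every $ib$ entry the identity of the store instruction whose value it currently holds — a purely conceptual tag, copied by $\wmmDeqSbRule$ both when it writes $m$ and when it pushes the old $m[a]$ into the invalidation buffers, which does not affect the operational semantics but pins down the read-from edges ($S \ReadFrom L$): a forwarding load reads from the tagging store in $sb$, a load of $m[a]$ reads from the store tagging $m[a]$, and a load of a stale value reads from the store tagging that $ib$ entry (which is precisely the store whose memory value was overwritten by the drain that created the entry). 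Multiplicity of equal values is thereby resolved.

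Next I would build $\MemOrd$ by timestamps. A store is timestamped by the step at which $\wmmDeqSbRule$ drains it to $m$; a $\ComInst$ or $\RecInst$ is timestamped by the step at which it executes; a load is timestamped by the step at which it executes, \emph{except} that a load returning a stale value out of $ib$ is timestamped just below the drain step that created that stale entry, with ties broken by execution order (which, on each processor, coincides with $\ProgOrd$ because the atomic processors run in order). $\MemOrd$ is the induced total order on all memory and fence instructions. The pull-back for stale reads is the one non-obvious choice and is forced: left at its execution step, such a load would land after the store that overwrote the value it returned, which immediately falsifies $\AxiLdVal$.

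I would then discharge Axiom $\AxiInstOrd$ entry by entry against the operational rules: same-address stores are FIFO in $sb$, so they drain in program order; $\ComInst$ executes only with an empty $sb$, so every program-order-earlier store has already drained and every later store enters $sb$ afterward; $\RecInst$ executes at its timestamp, and a program-order-later load either executes afterward or returns a stale value created \emph{after} $\RecInst$ cleared $ib$, so even its pulled-back timestamp is later; a store never precedes a program-order-earlier load, since its drain step exceeds that load's execution step (this is the always-$\True$ ``no $\LdInst$-$\StInst$ reordering'' entry); and the $\False$ entries assert nothing. For Axiom $\AxiLdVal$ I would split on how the load got its value. Forwarding: the source store is $\ProgOrd$-before the load and is the youngest same-address store in $sb$, and FIFO draining excludes any same-address store strictly between them in $\MemOrd$ or $\ProgOrd$, so it is the required maximum. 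Read from $m$: the load's processor has no same-address store in $sb$ (else forwarding is forced), hence every program-order-earlier same-address store has already drained, and the store tagging $m[a]$ is by construction the $\MemOrd$-latest same-address store before the load.

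The main obstacle will be the remaining case of $\AxiLdVal$: a load $L$ to $a$ returning a stale value tagged by $S$ out of its own $ib$. Here I must show that no same-address store $S'$ with $S \MemOrd S'$ satisfies $S' \MemOrd L$ or $S' \ProgOrd L$, which I expect to require a carefully chosen invariant maintained across the entire run: at every point, the $ib$ of a processor $P$ holds, for each address, a run of same-address stores increasing in $\MemOrd$ toward the fresh end, all strictly $\MemOrd$-below the store currently tagging $m[a]$ and all $\MemOrd$-at-least the $\MemOrd$-latest store to $a$ that $P$ has already written or read via an executed load. The delicate part is preservation: $\wmmDeqSbRule$ both appends a new stale entry to the other $ib$s and, through its ``skip if the $sb$ already holds the address'' side condition together with the purge performed by $\wmmStRule$, prevents $P$ from ever acquiring a stale value older than one it already knows (using also that the tag of $m[a]$ is monotone in $\MemOrd$); $\wmmLdRule$ discards everything strictly staler than what it returned, keeping observation monotone; and $\wmmRecRule$ empties $ib$ outright. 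Granting this invariant, the $S' \ProgOrd L$ sub-case follows because any same-address store program-order-before $L$ on $P$ has already drained when $L$ executes (none sits in $sb$, else forwarding would be forced) and is $\MemOrd$-at-most $S$ by the invariant's lower bound; and the $S' \MemOrd L$ sub-case follows because $L$'s pulled-back timestamp lies in the open interval between the drain of $S$ and the drain that overwrote it — which is exactly the drain of the $\MemOrd$-next same-address store — so nothing of address $a$ intervenes, and in particular that next store is $\MemOrd$-after $L$ despite having drained earlier in real time. Combining the cases yields $\AxiLdVal$, and hence the soundness inclusion.
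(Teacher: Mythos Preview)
Your proposal is correct and is essentially the paper's own argument. The paper likewise tags $m[a]$ and each $ib$ entry with a source store (its ghost fields $\SrcSt$ and $\OverwriteSt$), builds $\MemOrd$ by appending drains, fence executions, and non-stale load executions in real time while inserting each stale-read load \emph{right before} the overwriting drain that created its $ib$ entry---exactly your ``timestamp just below the drain step that created that stale entry''---and then maintains a family of invariants on $ib$ (its invariant~6) whose content matches the invariant you sketched: entries are $\MemOrd$-increasing toward the fresh end, and every entry's $\OverwriteSt$ dominates in $\MemOrd$ every same-address store or read-from source already observed by that processor (and every executed $\RecInst$). The only cosmetic difference is that the paper maintains $\MemOrd$ incrementally as a list with insertions, whereas you assign timestamps after the fact; these are equivalent constructions, and your explicit ``complete the run by draining all $sb$s'' step is what the paper assumes implicitly when it says that at the end all store buffers are empty.
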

\begin{proof}
The goal is that for any execution in the WMM $\IIE$ model, we can construct relations $\langle \ProgOrd, \MemOrd, \ReadFrom \rangle$ that have the same program behavior and satisfy the WMM axioms.
To do this, we first introduce the following ghost states to the $\IIE$ model:
\begin{itemize}
    \item Field $\SrcSt$ in the \regfile{} memory: For each address $a$, we add state $m[a].\SrcSt$ to record the store that writes the current memory value.
    \item Fields $\SrcSt$ and $\OverwriteSt$ in the invalidation buffer: For each stale value $\langle a, v\rangle$ in an invalidation buffer, we add state $v.\SrcSt$ to denote the store of this stale value, and add state $v.\OverwriteSt$ to denote the store that overwrites $v.\SrcSt$ in the memory.
    \item Per-processor list $\POIIE$: For each processor, $\POIIE$ is the list of all the instructions that has been executed by the processor.
    The order in $\POIIE$ is the same as the execution order in the processor.
    We also use $\POIIE$ to represent the ordering relation in the list (the head of the list is the oldest/minimum in $\POIIE$).
    \item Global list $\MOIIE$: $\MOIIE$ is a list of all the executed loads, executed fences, and stores that have been dequeued from the store buffers.
    $\MOIIE$ contains instructions from all processors.
    We also use $\MOIIE$ to represent the ordering relation in the list (the head of the list is the oldest/minimum in $\MOIIE$).
    \item Read-from relations $\RFIIE$: $\RFIIE$ is a set of edges.
    Each edge points from a store to a load, indicating that the load had read from the store in the $\IIE$ model.
\end{itemize}
$m[a].\SrcSt$ initially points to the initialization store, and $\POIIE,\MOIIE,\RFIIE$ are all initially empty.
We now show how these states are updated in the operations of the WMM $\IIE$ model.
\begin{enumerate}
    \item \wmmNmRule, \wmmComRule, \wmmRecRule, \wmmStRule: Assume the operation executes an instruction $I$ in processor $i$.
    We append $I$ to the tail of list $\POIIE$ of processor $i$.
    If $I$ is a fence (i.e., the operation is \wmmComRule{} or \wmmRecRule), then we also append $I$ to the tail of list $\MOIIE$.
    \item \wmmDeqSbRule: Assume the operation dequeues a store $S$ for address $a$.
    In this case, we update $m[a].\SrcSt$ to be $S$.
    Let $S_0$ be the original $m[a].\SrcSt$ before this operation is performed.
    Then for each new stale value $\langle a, v\rangle$ inserted into any invalidation buffer, we set $v.\SrcSt = S_0$ and $v.\OverwriteSt = S$.
    We also append $S$ to the tail of list $\MOIIE$.
    \item \wmmLdRule: Assume the operation executes a load $L$ for address $a$ in processor $i$.
    We append $L$  to the tail of list $\POIIE$ of processor $i$.
    The remaining actions depends on how $L$ gets its value in this operation:
    \begin{itemize}
        \item If $L$ reads from a store $S$ in the local store buffer, then we add edge $S\RFIIE L$, and append $L$ to the tail of list $\MOIIE$.
        \item If $L$ reads the \regfile{} memory $m[a]$, then we add edge $m[a].\SrcSt \RFIIE L$, and append $L$ to the tail of list $\MOIIE$.
        \item If $L$ reads a stale value $\langle a, v\rangle$ in the local invalidation buffer, then we add edge $v.\SrcSt \RFIIE L$, and we insert $L$ to be right before $v.\OverwriteSt$ in list $\MOIIE$ (i.e., $L$ is older than $v.\OverwriteSt$, but is younger than any other instruction which is older than $v.\OverwriteSt$).
    \end{itemize} 
\end{enumerate}
As we will see later, at the end of the $\IIE$ execution, $\POIIE$, $\MOIIE$ and $\RFIIE$ will become the $\langle \ProgOrd, \MemOrd, \ReadFrom \rangle$ relations that satisfy the WMM axioms.
Before getting there, we show that the $\IIE$ model has the following invariants after each operation is performed:
\begin{enumerate}
    \item \label{inv:sound:mem-val} For each address $a$, $m[a].\SrcSt$ in the $\IIE$ model is the youngest store for $a$ in $\MOIIE$.
    \item \label{inv:sound:ldf} All loads and fences that have been executed in the $\IIE$ model are in $\MOIIE$.
    \item \label{inv:sound:sb} An executed store is either in $\MOIIE$ or in store buffer, i.e., for each processor $i$, the store buffer of processor $i$ contains exactly every store that has been executed in the $\IIE$ model but is not in $\MOIIE$.
    \item \label{inv:sound:sb-ord} For any two stores $S_1$ and $S_2$ for the same address in the store buffer of any processor $i$ in the $\IIE$ model, if $S_1$ is older than $S_2$ in the store buffer, then $S_1\POIIE S_2$.
    \item \label{inv:sound:sb-ib} For any processor $i$ and any address $a$, address $a$ cannot be present in the store buffer and invalidation buffer of processor $i$ at the same time.
    \item \label{inv:sound:ib} For any stale value $v$ for any address $a$ in the invalidation buffer of any processor $i$ in the $\IIE$ model, the following invariants hold:
    \begin{enumerate}
        \item \label{inv:sound:ib:ovw} $v.\SrcSt$ and $v.\OverwriteSt$ are in $\MOIIE$, and $v.\SrcSt \MOIIE v.\OverwriteSt$, and there is no other store for $a$ between them in $\MOIIE$.
        \item \label{inv:sound:ib:rec} For any $\RecInst$ fence $F$ that has been executed by processor $i$ in the $\IIE$ model, $F \MOIIE v.\OverwriteSt$.
        \item \label{inv:sound:ib:st} For any store $S$ for $a$ that has been executed by processor $i$ in the $\IIE$ model, $S \MOIIE v.\OverwriteSt$.
        \item \label{inv:sound:ib:ld} For any load $L$ for $a$ that has been executed by processor $i$ in the $\IIE$ model, if store $S \RFIIE L$, then $S \MOIIE v.\OverwriteSt$.
    \end{enumerate}
    \item \label{inv:sound:ib-ord} For any two stale values $v_1$ and $v_2$ for the same address in the invalidation buffer of any processor $i$ in the $\IIE$ model, if $v_1$ is older than $v_2$ in the invalidation buffer, then $v_1.\SrcSt \MOIIE v_2.\SrcSt$.
    \item \label{inv:sound:inst-ord} For any instructions $I_1$ and $I_2$, if $I_1\POIIE I_2$ and $\orderFunc(I_1, I_2)$ and $I_2$ is in $\MOIIE$, then $I_1 \MOIIE I_2$. 
    \item \label{inv:sound:rf} For any load $L$ and store $S$, if $S\RFIIE L$, then the following invariants hold:
    \begin{enumerate}
        \item \label{inv:sound:rf:sb} If $S$ not in $\MOIIE$, then $S$ is in the store buffer of the processor of $L$, and $S\POIIE L$, and there is no store $S'$ for the same address in the same store buffer such that $S\POIIE S'\POIIE L$.
        \item \label{inv:sound:rf:mem} If $S$ is in $\MOIIE$, then $S = \max_{mo\Hyphen{}i2e}\{ S'\ |\ S'.\Addr = L.\Addr\ \wedge\ (S'\POIIE L\ \vee\ S'\MOIIE L) \}$, and there is no other store $S''$ for the same address in the store buffer of the processor of $L$ such that $S''\POIIE L$.
    \end{enumerate}
\end{enumerate}
We now prove inductively that all invariants hold after each operation $R$ is performed in the $\IIE$ model, i.e., we assume all invariants hold before $R$ is performed.
In case performing $R$ changes some states (e.g., $\MOIIE$), we use superscript 0 to denote the state before $R$ is performed (e.g., $\MOIIE^0$) and use superscript 1 to denote the state after $R$ is performed (e.g., $\MOIIE^1$).
Now we consider the type of $R$:
\begin{enumerate}
    \item \wmmNmRule: All invariants still hold.
    
    \item \wmmStRule: Assume $R$ executes a store $S$ for address $a$ in processor $i$.
    $R$ changes the states of the store buffer, invalidation buffer, and $\POIIE$ of processor $i$.
    Now we consider each invariant.
    \begin{itemize}
        \item Invariant~\ref{inv:sound:mem-val}, \ref{inv:sound:ldf}: These are not affected.
        
        \item Invariant~\ref{inv:sound:sb}: This invariant still holds for the newly executed store $S$.
        
        \item Invariant~\ref{inv:sound:sb-ord}: Since $S$ becomes the youngest store in the store buffer of processor $i$, this invariant still holds.
        
        \item Invariant~\ref{inv:sound:sb-ib}: Since $R$ will clear address $a$ from the invalidation buffer of processor $i$, this invariant still holds.
        
        \item Invariant~\ref{inv:sound:ib}: Invariants~\ref{inv:sound:ib:ovw}, \ref{inv:sound:ib:rec}, \ref{inv:sound:ib:ld} are not affected.
        Invariant~\ref{inv:sound:ib:st} still holds because there is no stale value for $a$ in the invalidation buffer of processor $i$ after $R$ is performed.
        
        \item Invariant~\ref{inv:sound:ib-ord}: This is not affected, because $R$ can only remove values from the invalidation buffer.
        
        \item Invariant~\ref{inv:sound:inst-ord}: This is not affected because $R$ is not in $\MOIIE$.
        
        \item Invariant~\ref{inv:sound:rf}: Consider load $L^*$ and store $S^*$ for address $a$ such that $S^*\RFIIE L^*$ and $L^*$ is from processor $i$.
        We need to show that this invariant still holds for $L^*$ and $S^*$.
        Since $L^*$ has been executed, we have $L^*\POIIE^1 S$.
        Thus this invariant cannot be affected.
    \end{itemize}
    
    \item \wmmComRule: Assume $R$ executes a $\ComInst$ fence $F$ in processor $i$.
    $R$ adds $F$ to the end of the $\POIIE$ of processor $i$ and adds $F$ to the end of $\MOIIE$.
    Now we consider each invariant.
    \begin{itemize}
        \item Invariants~\ref{inv:sound:mem-val}, \ref{inv:sound:sb}, \ref{inv:sound:sb-ord}, \ref{inv:sound:sb-ib}, \ref{inv:sound:ib}, \ref{inv:sound:ib-ord}, \ref{inv:sound:rf}: These are not affected.
        
        \item Invariant~\ref{inv:sound:ldf}: This still holds because $F$ is added to $\MOIIE$.
        
        \item Invariant~\ref{inv:sound:inst-ord}: Consider instruction $I$ in processor $i$ such that $I\POIIE F$ and $\orderFunc(I, F)$.
        We need to show that $I\MOIIE^1 F$.
        Since $\orderFunc(I, F)$, $I$ can be a load, or store, or fence.
        If $I$ is a load or fence, since $I$ has been executed, invariant~\ref{inv:sound:ldf} says that $I$ is in $\MOIIE^0$ before $R$ is performed.
        Since $F$ is added to the end of $\MOIIE$, $I\MOIIE^1 F$.
        If $I$ is a store, the predicate of $R$ says that $I$ is not in the store buffer.
        Then invariant~\ref{inv:sound:sb} says that $I$ must be in $\MOIIE^0$, and we have $I\MOIIE^1 F$.
    \end{itemize}
    
    \item \wmmRecRule: Assume $R$ executes a $\RecInst$ fence $F$ in processor $i$.
    $R$ adds $F$ to the end of the $\POIIE$ of processor $i$, adds $F$ to the end of $\MOIIE$, and clear the invalidation buffer of processor $i$.
    Now we consider each invariant.
    \begin{itemize}
        \item Invariants~\ref{inv:sound:mem-val}, \ref{inv:sound:sb}, \ref{inv:sound:sb-ord}, \ref{inv:sound:rf}: These are not affected.
        
        \item Invariant~\ref{inv:sound:ldf}: This still holds because $F$ is added to $\MOIIE$.
        
        \item Invariants~\ref{inv:sound:sb-ib}, \ref{inv:sound:ib}, \ref{inv:sound:ib-ord}: These invariant still hold because the invalidation buffer of processor $i$ is now empty.
        
        \item Invariant~\ref{inv:sound:inst-ord}: Consider instruction $I$ in processor $i$ such that $I\POIIE F$ and $\orderFunc(I, F)$.
        We need to show that $I\MOIIE^1 F$.
        Since $\orderFunc(I, F)$, $I$ can be a load or fence.
        Since $I$ has been executed, $I$ must be in $\MOIIE^0$ before $R$ is performed according to invariant~\ref{inv:sound:ldf}.
        Thus, $I\MOIIE^1 F$.
    \end{itemize}
    
    \item \wmmDeqSbRule: Assume $R$ dequeues a store $S$ for address $a$ from the store buffer of processor $i$.
    $R$ changes the store buffer of processor $i$, the \regfile{} memory $m[a]$, and invalidation buffers of other processors.
    $R$ also adds $S$ to the end of $\MOIIE$.
    Now we consider each invariant.
    \begin{itemize}
        \item Invariant~\ref{inv:sound:mem-val}: This invariant still holds, because $m[a].\SrcSt^1=S$ and $S$ becomes the youngest store for $a$ in $\MOIIE^1$.
        
        \item Invariant~\ref{inv:sound:ldf}: This is not affected.
        
        \item Invariant~\ref{inv:sound:sb}: This invariant still holds, because $S$ is removed from store buffer and added to $\MOIIE$.
        
        \item Invariants~\ref{inv:sound:sb-ord}: This is not affected because we only remove stores from the store buffer.
        
        \item Invariant~\ref{inv:sound:sb-ib}: The store buffer and invaliation buffer of processor $i$ cannot be affected.
        The store buffer and invalidation buffer of processor $j$ ($\neq i$) may be affected, because $m[a]^0$ may be inserted into the invalidation buffer of processor $j$.
        The predicate of $R$ ensures that the insertion will not happen if the store buffer of processor $j$ contains address $a$, so the invariant still holds.
        
        \item Invariant~\ref{inv:sound:ib}: We need to consider the influence on both existing stale values and the newly inserted stale values.
        \begin{enumerate}
            \item Consider stale value $\langle a, v\rangle$ which is in the invalidation buffer of processor $j$ both before and after operation $R$ is performed.
            This implies $j \neq i$, because the store buffer of processor $i$ contains address $a$ before $R$ is performed, and invariant~\ref{inv:sound:sb-ib} says that the invalidation buffer of processor $i$ cannot have address $a$ before $R$ is performed.
            Now we show that each invariant still holds for $\langle a, v\rangle$.
            \begin{itemize}
                \item Invariant~\ref{inv:sound:ib:ovw}: This still holds because $S$ is the youngest in $\MOIIE$.
                \item Invariant~\ref{inv:sound:ib:rec}: This is not affected.
                \item Invariant~\ref{inv:sound:ib:st}: This is not affected because $S$ is not executed by processor $j$.
                \item Invariant~\ref{inv:sound:ib:ld}: Since $S$ is not in $\MOIIE^0$, invariant~\ref{inv:sound:rf:sb} says that any load that has read $S$ must be from process $i$.
                Since $i\neq j$, this invariant cannot be affected.
            \end{itemize}
            \item Consider the new stale value $\langle a, v\rangle$ inserted to the invalidation buffer of process $j$ ($\neq i$).
            According to \wmmDeqSbRule, $v = m[a]^0$, $v.\SrcSt = m[a].\SrcSt^0$, and $v.\OverwriteSt = S$.
            Now we check each invariant.
            \begin{itemize}
                \item Invariant~\ref{inv:sound:ib:ovw}: According to invariant~\ref{inv:sound:mem-val}, $v.\SrcSt = m[a]^0.\SrcSt$ is the youngest store for $a$ in $\MOIIE^0$.
                Since $S$ (i.e., $v.\OverwriteSt$) is appended to the tail of $\MOIIE^0$, this invariant still holds.
                \item Invariant~\ref{inv:sound:ib:rec}: According to invariant~\ref{inv:sound:ldf}, any $\RecInst$ fence $F$ executed by processor $j$ must be in $\MOIIE^0$.
                Thus, $F\MOIIE^1 S = v.\OverwriteSt$, and the invariant still holds.
                \item Invariant~\ref{inv:sound:ib:st}: The predicate of $R$ says that the store buffer of processor $j$ cannot contain address $a$.
                Therefore, according to invariant~\ref{inv:sound:sb}, any store $S'$ for $a$ executed by processor $j$ must be in $\MOIIE^0$.
                Thus, $S'\MOIIE^1 S = v.\OverwriteSt$, and the invariant still holds.
                \item Invariant~\ref{inv:sound:ib:ld}: Consider load $L$ for address $a$ that has been executed by processor $j$.
                Assume store $S'\RFIIE L$.
                The predicate of $R$ says that the store buffer of processor $j$ cannot contain address $a$.
                Thus, $S'$ must be in $\MOIIE^0$ according to invariant~\ref{inv:sound:rf:sb}.
                Therefore, $S'\MOIIE^1 S = v.\OverwriteSt$, and the invariant still holds.
            \end{itemize}
        \end{enumerate}
        
        \item Invariant~\ref{inv:sound:inst-ord}: Consider instruction $I$ such that $I\POIIE S$ and $\orderFunc(I, S)$.
        Since $\orderFunc(I, S)$, $I$ can be a load, fence, or store for $a$.
        If $I$ is a load or fence, then invariant~\ref{inv:sound:ldf} says that $I$ is in $\MOIIE^0$, and thus $I\MOIIE^1 S$, i.e., the invariant holds.
        If $I$ is a store for $a$, then the predicate of $R$ and invariant~\ref{inv:sound:sb-ord} imply that $I$ is not in the store buffer of processor $i$.
        Then invariant~\ref{inv:sound:sb} says that $I$ must be in $\MOIIE^0$, and thus $I\MOIIE^1 S$, i.e., the invariant holds.
        
        \item Invariant~\ref{inv:sound:rf}: We need to consider the influence on both loads that read $S$ and loads that reads stores other than $S$.
        \begin{enumerate}
            \item Consider load $L$ for address $a$ that reads from $S$, i.e., $S\RFIIE L$.
            Since $S$ is not in $\MOIIE^0$ before $R$ is performed, invariant~\ref{inv:sound:rf:sb} says that $L$ must be executed by processor $i$, $S\POIIE L$, and there is no store $S'$ for $a$ in the store buffer of processor $i$ such that $S\POIIE S'\POIIE L$.
            Now we show that both invariants still hold for $S\RFIIE L$.
            \begin{itemize}
                \item Invariant~\ref{inv:sound:rf:sb}: This is not affected because $S$ is in $\MOIIE^1$ after $R$ is performed.
                \item Invariant~\ref{inv:sound:rf:mem}: Since $S\POIIE L$ and $S$ is the youngest in $\MOIIE^1$, $S$ satisfies the $\max_{mo\Hyphen{}i2e}$ formula.
                We prove the rest of this invariant by contradiction, i.e., we assume there is store $S'$ for $a$ in the store buffer of processor $i$ after $R$ is performed such that $S' \POIIE L$.
                Note that $\POIIE$ is not changed by $R$.
                The predicate of $R$ ensures that $S$ is the oldest store for $a$ in the store buffer.
                Invariant~\ref{inv:sound:sb-ord} says that $S\POIIE S'$.
                Now we have $S\POIIE S'\POIIE L$ (before $R$ is performed), contradicting with invariant~\ref{inv:sound:rf:sb}.
            \end{itemize}
            \item Consider load $L$ for address $a$ from processor $j$ that reads from store $S^*$ ($\neq S$), i.e., $S\neq S^*\RFIIE L$.
            Now we show that both invariants still hold for $S^*\RFIIE L$.
            \begin{itemize}
                \item Invariant~\ref{inv:sound:rf:sb}: This invariant cannot be affected, because performing $R$ can only remove a store from a store buffer.
                \item Invariant~\ref{inv:sound:rf:mem}: This invariant can only be affected when $S^*$ is in $\MOIIE^0$.
                Since $R$ can only remove a store from a store buffer, the second half of this invariant is not affected (i.e, no store $S''$ in the store buffer and so on).
                We only need to focus on the $\max_{mo\Hyphen{}i2e}$ formula, i.e., $S^* = \max_{mo\Hyphen{}i2e}\{ S'\ |\ S'.\Addr = a\ \wedge\ (S'\POIIE L\ \vee\ S'\MOIIE L) \}$.
                Since $L\MOIIE^1 S$, this formula can only be affected when $S\POIIE L$ and $i = j$.
                In this case, before $R$ is performed, $S$ is in the store buffer of processor $i$, and $S\POIIE L$, and $L$ reads from $S^*\neq S$.
                This contradicts with invariant~\ref{inv:sound:rf:mem} which is assume to hold before $R$ is performed.
                Thus, the $\max_{mo\Hyphen{}i2e}$ formula cannot be affected either, i.e., the invariant holds.
            \end{itemize}
        \end{enumerate}
    \end{itemize}
    
    \item \wmmLdRule{} that reads from local store buffer: Assume $R$ executes a load $L$ for address $a$ in processor $i$, and $L$ reads from store $S$ in the local store buffer.
    $R$ appends $L$ to the $\POIIE$ of processor $i$, appends $L$ to $\MOIIE$, and adds $S\RFIIE^1 L$.
    Note that $R$ does not change any invalidation buffer or store buffer.
    Now we consider each invariant.
    \begin{itemize}
        \item Invariants~\ref{inv:sound:mem-val}, \ref{inv:sound:sb}, \ref{inv:sound:sb-ord}, \ref{inv:sound:sb-ib}, \ref{inv:sound:ib-ord}: These are not affected.
        
        \item Invariant~\ref{inv:sound:ldf}: This still holds because $L$ is added to $\MOIIE$.
        
        \item Invariant~\ref{inv:sound:ib}: We consider each invariant.
        \begin{itemize}
            \item Invariants~\ref{inv:sound:ib:ovw}, \ref{inv:sound:ib:rec}, \ref{inv:sound:ib:st}: These are not affected.
            \item Invariant~\ref{inv:sound:ib:ld}: $L$ can only influence stale values for $a$ in the invalidation buffer of processor $i$.
            However, since $S$ is in the store buffer of processor $i$ before $R$ is performed, invariant~\ref{inv:sound:sb-ib} says that the invalidation buffer of processor $i$ cannot contain address $a$.
            Therefore this invariant still holds.
        \end{itemize}
        
        \item Invariant~\ref{inv:sound:inst-ord}: We consider instruction $I$ such that $I\POIIE^1 L$ and $\orderFunc(I, L)$.
        Since $\orderFunc(I, L)$, $I$ can only be a $\RecInst$ fence or a load for $a$.
        In either case, invariant~\ref{inv:sound:ldf} says that $I$ is in $\MOIIE^0$.
        Since $L$ is appended to the end of $\MOIIE$, $I\MOIIE^1 L$, i.e., the invariant still holds.
        
        \item Invariant~\ref{inv:sound:rf}: Since $R$ does not change any store buffer or any load/store already in $\MOIIE^0$, $R$ cannot affect this invariant for loads other than $L$.
        We only need to show that $S\RFIIE^1 L$ satisfies this invariant.
        Since $S$ is in the store buffer, invariant~\ref{inv:sound:sb} says that $S$ is not in $\MOIIE$.
        Therefore we only need to consider invariant~\ref{inv:sound:rf:sb}.
        We prove by contradiction, i.e., we assume there is store $S'$ for $a$ in the store buffer of processor $i$ and $S\POIIE^1 S'\POIIE^1 L$.
        Since $R$ does not change store buffer states, $S$ and $S'$ are both in the store buffer before $R$ is performed.
        We also have $S\POIIE^0 S'$ (because the only change in $\POIIE$ is to append $L$ to the end).
        According to the predicate of $R$, $S$ should be younger than $S'$, so $S'\POIIE^0 S$ (according to invariant~\ref{inv:sound:sb-ord}), contradicting with the previous conclusion.
        Therefore, the invariant still holds.
    \end{itemize}
    
    \item \wmmLdRule{} that reads from \regfile{} memory: Assume $R$ executes a load $L$ for address $a$ in processor $i$, and $L$ reads from \regfile{} memory $m[a]$.
    $R$ appends $L$ to the $\POIIE$ of processor $i$, appends $L$ to $\MOIIE$, adds $m[a].\SrcSt \RFIIE^1 L$, and may remove stale values from the invalidation buffer of processor $i$.
    Now we consider each invariant.
    \begin{itemize}
        \item Invariants~\ref{inv:sound:mem-val}, \ref{inv:sound:sb}, \ref{inv:sound:sb-ord}: These are not affected.
        
        \item Invariant~\ref{inv:sound:ldf}: This still holds because $L$ is added to $\MOIIE$.
        
        \item Invariants~\ref{inv:sound:sb-ib}, \ref{inv:sound:ib-ord}: These are not affected because $R$ only remove values from an invalidation buffer.
        
        \item Invariant~\ref{inv:sound:ib}: We consider each invariant.
        \begin{itemize}
            \item Invariants~\ref{inv:sound:ib:ovw}, \ref{inv:sound:ib:rec}, \ref{inv:sound:ib:st}: These are not affected because $R$ only remove values from an invalidation buffer.
            \item Invariant~\ref{inv:sound:ib:ld}: $L$ can only influence stale values for $a$ in the invalidation buffer of processor $i$.
            However, $R$ will remove address $a$ from the the invalidation buffer of processor $i$.
            Therefore this invariant still holds.
        \end{itemize}
        
        \item Invariant~\ref{inv:sound:inst-ord}: We consider instruction $I$ such that $I\POIIE^1 L$ and $\orderFunc(I, L)$.
        Since $\orderFunc(I, L)$, $I$ can only be a $\RecInst$ fence or a load for $a$.
        In either case, invariant~\ref{inv:sound:ldf} says that $I$ is in $\MOIIE^0$.
        Since $L$ is appended to the end of $\MOIIE$, $I\MOIIE^1 L$, i.e., the invariant still holds.
        
        \item Invariant~\ref{inv:sound:rf}: Since $R$ does not change any store buffer or any load/store already in $\MOIIE^0$, $R$ cannot affect this invariant for loads other than $L$.
        We only need to show that $m[a].\SrcSt\RFIIE^1 L$ satisfies this invariant ($m[a]$ is not changed before and after $R$ is performed).
        According to invariant~\ref{inv:sound:mem-val}, $m[a].\SrcSt$ is the youngest store for $a$ in $\MOIIE^0$.
        Therefore we only need to consider invariant~\ref{inv:sound:rf:mem}.
        Since we also have $m[a].\SrcSt \POIIE^1 L$, $\max_{mo\Hyphen{}i2e}\{ S'\ |\ S'.\Addr = a\ \wedge\ (S'\POIIE L\ \vee\ S'\MOIIE L) \}$ will return $m[a].\SrcSt$, i.e., the first half the invariant holds.
        The predicate of $R$ ensures that there is no store for $a$ in the store buffer of processor $i$, so the second half the invariant also holds.
    \end{itemize}
    
    \item \wmmLdRule{} that reads from the invalidation buffer: Assume $R$ executes a load $L$ for address $a$ in processor $i$, and $L$ reads from the stale value $\langle a, v\rangle$ in the local invalidation buffer.
    $R$ appends $L$ to the $\POIIE$ of processor $i$, appends $L$ to $\MOIIE$, adds $v.\SrcSt \RFIIE^1 L$, and may remove stale values from the invalidation buffer of processor $i$.
    Now we consider each invariant.
    \begin{itemize}
        \item Invariants~\ref{inv:sound:mem-val}, \ref{inv:sound:sb}, \ref{inv:sound:sb-ord}: These are not affected.
        
        \item Invariant~\ref{inv:sound:ldf}: This still holds because $L$ is added to $\MOIIE$.
        
        \item Invariants~\ref{inv:sound:sb-ib}, \ref{inv:sound:ib-ord}: These are not affected because $R$ can only remove values from an invalidation buffer.
        
        \item Invariant~\ref{inv:sound:ib}: We consider each invariant.
        \begin{itemize}
            \item Invariants~\ref{inv:sound:ib:ovw}, \ref{inv:sound:ib:rec}, \ref{inv:sound:ib:st}: These are not affected because $R$ can only remove values from an invalidation buffer.
            \item Invariant~\ref{inv:sound:ib:ld}: Only stale values in the invalidation buffer of processor $i$ can be affected.
            Consider stale value $\langle a, v'\rangle$ in the invalidation buffer of processor $i$ after $R$ is performed.
            We need to show that $v.\SrcSt \MOIIE^1 v'.\OverwriteSt$.
            Since $v'$ is not removed in $R$, $v'$ must be either $v$ or younger than $v$ in the invalidation buffer before $R$ is performed.
            According to invariant~\ref{inv:sound:ib-ord}, either $v'.\SrcSt = v.\SrcSt$ or $v.\SrcSt \MOIIE^0 v'.\SrcSt$.
            Since $v'.\SrcSt \MOIIE^0 v'.\OverwriteSt$ according to invariant~\ref{inv:sound:ib:ovw}, $v.\SrcSt \MOIIE^1 v'.\OverwriteSt$.
        \end{itemize}
        
        \item Invariant~\ref{inv:sound:inst-ord}: We consider instruction $I$ such that $I\POIIE^1 L$ and $\orderFunc(I, L)$, and we need to show that $I\MOIIE^1 L$.
        Since $\orderFunc(I, L)$, $I$ can only be a $\RecInst$ fence or a load for $a$.
        If $I$ is a $\RecInst$ fence, then invariant~\ref{inv:sound:ib:rec} says that $I\MOIIE^0 v.\OverwriteSt$.
        Since we insert $L$ right before $v.\OverwriteSt$, we still have $I\MOIIE^1 L$.
        If $I$ is a load for $a$, then invariant~\ref{inv:sound:ib:ld} says that $I\MOIIE^0 v.\OverwriteSt$, and thus we have $I\MOIIE^1 L$.
        
        \item Invariant~\ref{inv:sound:rf}: Since $R$ does not change any store buffer or any load/store already in $\MOIIE^0$, $R$ cannot affect this invariant for loads other than $L$.
        We only need to show that $v.\SrcSt\RFIIE^1 L$ satisfies this invariant.
        Since $v.\SrcSt$ is in $\MOIIE^0$, we only need to consider invariant~\ref{inv:sound:rf:mem}.
        The predicate of $R$ ensures that the store buffer of processor $i$ cannot contain address $a$, so the second half of the invariant holds (i.e., there is no $S''$ and so on).
        
        Now we prove the first half of the invariant, i.e., consider $\max_{mo\Hyphen{}i2e}\{ S'\ |\ S'.\Addr = a\ \wedge\ (S'\POIIE^1 L\ \vee\ S'\MOIIE^1 L) \}$.
        First note that since $v.\SrcSt \MOIIE^0 v.\OverwriteSt$, $v.\SrcSt\MOIIE^1 L$.
        Thus $v.\SrcSt$ is in set $\{ S'\ |\ S'.\Addr = a\ \wedge\ (S'\POIIE^1 L\ \vee\ S'\MOIIE^1 L) \}$.
        Consider any store $S$ that is also in this set, then $S\POIIE^1 L\ \vee\ S\MOIIE^1 L$ must be true.
        If $S\POIIE^1 L$, $S$ is executed in processor $i$ before $R$ is performed.
        Invariant~\ref{inv:sound:ib:st} says that $S\MOIIE^0 v.\OverwriteSt \Rightarrow S\MOIIE^1 v.\OverwriteSt$.
        If $S\MOIIE^1 L$, then $S\MOIIE^1 L\MOIIE^1 v.\OverwriteSt$.
        In either case, we have $S\MOIIE v.\OverwriteSt$.
        Since we have proved invariant~\ref{inv:sound:ib:ovw} holds after $R$ is performed, either $S=v.\SrcSt$ or $S\MOIIE v.\OverwriteSt$.
        Therefore, $\max_{mo\Hyphen{}i2e}$ will return $v.\SrcSt$.
    \end{itemize}
\end{enumerate}
It is easy to see that at the end of the $\IIE$ execution (of a program), there is no instruction to execute in each processor and all store buffers are empty (i.e., all exected loads stores and fences are in $\MOIIE$).
At that time, if we define axiomatic relations $\ProgOrd,\MemOrd,\ReadFrom$ as $\POIIE,\MOIIE,\RFIIE$ respectively, then invariants~\ref{inv:sound:inst-ord} and \ref{inv:sound:rf:mem} become the \AxiInstOrd{} and \AxiLdVal{} axioms respectively.
That is, $\langle \POIIE,\MOIIE,\RFIIE \rangle$ are the relations that satisfy the WMM axioms and have the same program behavior as the $\IIE$ execution.
\end{proof}

\begin{theorem}[Completeness]\label{thm:wmm:axi<i2e}
    WMM axiomatic model $\subseteq$ WMM \IIE{} model.
\end{theorem}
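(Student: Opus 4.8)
The plan is to do the converse of Theorem~\ref{thm:wmm:i2e<axi}: given an axiomatic execution of a program, i.e.\ relations $\langle \ProgOrd, \MemOrd, \ReadFrom \rangle$ satisfying Axioms \AxiInstOrd{} and \AxiLdVal, I would exhibit a run of the WMM $\IIE{}$ abstract machine on the same program that assigns every load the same value (hence reaches the same architectural state). The natural construction is to \emph{replay} the program using $\MemOrd$ as a schedule: scan $\MemOrd$ from least to greatest while keeping, for each processor $i$, a program-order pointer $p_i$ to its next unexecuted instruction together with the current $sb_i$, $ib_i$ and $m$. When the scan reaches an instruction $E$ on processor $i$, advance $p_i$, firing \wmmNmRule/\wmmLdRule/\wmmStRule/\wmmComRule/\wmmRecRule{} in program order until $E$ itself has been executed; the instant the scan reaches a store $S$ we additionally fire \wmmDeqSbRule{} for $S$.

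Two bookkeeping facts should make the non-load steps legal under a suitable induction hypothesis. First, whenever a $\ComInst$ is executed, every one of its $\ProgOrd$-predecessors is $\MemOrd$-earlier (the $\ComInst$ column of the order-preserving table is all $\True$), so by induction every earlier store has already been dequeued and $sb_i$ is empty, as \wmmComRule{} requires. Second, a store $S$ to $a$ being dequeued by the scan is genuinely the oldest store for $a$ in $sb_i$: any $\ProgOrd$-earlier store to $a$ is $\MemOrd$-earlier ($\orderFunc(\StInst\ a\ v,\StInst\ a\ v')=\True$), hence already gone, while no $\ProgOrd$-later store to $a$ has been executed yet. The genuinely delicate case is loads: the table permits a load to a different address to sit $\ProgOrd$-before $E$ yet $\MemOrd$-after it, so such a load is executed out of $\MemOrd$ order (either ahead of or, after deferring processor $i$, behind its $\MemOrd$ position), and it is precisely these reordered loads that must read from $ib$.

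The heart of the argument is the inductive claim that at the moment the scan executes a load $L$ to address $a$ with $S_L \ReadFrom L$, the value written by $S_L$ is exactly what \wmmLdRule{} can deliver: either $S_L$ is still in $sb_i$ as the youngest store to $a$ there (the same-processor forwarding case, occurring exactly when $S_L \ProgOrd L$ and $S_L$ has not yet been dequeued), or $m[a].\SrcSt = S_L$ (exactly when $S_L$ is the $\MemOrd$-latest store to $a$ dequeued so far), or $\langle a, v_{S_L}\rangle$ is present in $ib_i$ (the reordered-load / stale case). Each of these is forced by Axiom \AxiLdVal{} together with per-location SC: any store to $a$ that were $\MemOrd$-between $S_L$ and $L$ would by \AxiLdVal{} have to be the store $L$ reads from, a contradiction, which simultaneously rules out a ``too fresh'' or ``too stale'' value sitting in $m$ or $sb_i$ and shows the $ib_i$ entry for $S_L$ has neither been overwritten out, nor flushed by a $\ProgOrd$-earlier same-address load on $i$, nor cleared by a $\ProgOrd$-earlier $\RecInst$ (using $\orderFunc(\RecInst,\LdInst)=\True$). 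Bundling these facts yields a family of invariants mirroring those of Theorem~\ref{thm:wmm:i2e<axi}, now with $\ProgOrd,\MemOrd,\ReadFrom$ in the roles played there by $\POIIE,\MOIIE,\RFIIE$, and a routine induction on the scan shows they are preserved by every operation; at the end all store buffers are empty and the $\IIE{}$ run's behavior coincides with the axiomatic one.

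I expect the main obstacle to be exactly this reconciliation of the in-order atomic processor with the global memory order via the invalidation buffer: one must pin down, for each load, which of the two nondeterministic branches of \wmmLdRule{} to take (and, for a reordered load, which $ib_i$ entry), prove the chosen branch returns $S_L$ and only $S_L$, and check that the side effects of \wmmLdRule{} (flushing $ib_i$ entries older than the result) and of \wmmDeqSbRule{} (skipping a $sb$ that already holds the address) leave every invariant intact. Once the load case is settled, the store, fence and non-memory cases are immediate.
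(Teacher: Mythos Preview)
Your high-level plan---replay $\MemOrd$ and show each load can be served by \wmmLdRule{}---is the right one, and the invariants you anticipate are close to what the paper uses. But the concrete schedule you describe has a real gap. You propose, when the scan reaches $E$ on processor~$i$, to advance $p_i$ in program order until $E$ is executed. This forces a $\ProgOrd$-earlier, $\MemOrd$-later load $L'$ to execute \emph{ahead} of its memory-order position, and at that moment the store it needs may simply not exist anywhere in the machine. Concretely: take P1 with $L_1=\LdInst\ a$ then $L_2=\LdInst\ b$, P2 with $S_a=\StInst\ a\ 1$, and $\MemOrd = L_2 \MemOrd S_a \MemOrd L_1$ with $S_a\ReadFrom L_1$. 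This satisfies both axioms. Your scan hits $L_2$ first and fires $L_1$ before it; but $S_a$ has neither been enqueued into $sb_2$ nor written to $m$, and $ib_1$ is empty, so $L_1$ can only read~$0$. Your claim that ``it is precisely these reordered loads that must read from $ib$'' is backwards: a load executed \emph{early} cannot find a not-yet-written value in $ib$.

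The paper's construction fixes this by reordering in the opposite direction. It eagerly executes every store (and non-memory instruction) into $sb$ the moment it becomes the next instruction, and when the $\MemOrd$ scan pops a load $L$ that is not yet the next instruction on its processor, it does not execute $L$ but parks it in a set~$Z$; only when $L$ becomes the next instruction (all $\ProgOrd$-predecessors done) is it fired. Thus the out-of-order load is executed \emph{late} relative to $\MemOrd$, and by then the store it reads from has been dequeued to $m$---possibly overwritten, in which case $L$ genuinely reads from $ib$. The paper then maintains, as the key invariant, a precise characterisation $V_i$ of exactly which stale stores must still be visible in $ib_i$, and shows every deferred load's source lies in $V_i$. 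Your parenthetical ``after deferring processor~$i$'' gestures toward this, but your algorithm as written does the opposite; you need the $Z$ mechanism (or an equivalent) and the eager-store rule to make the replay go through.
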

\begin{proof}
The goal is that for any axiomatic relations $\langle \ProgOrd, \MemOrd, \ReadFrom \rangle$ that satisfy the WMM axioms, we can run the same program in the $\IIE$ model and get the same program behavior.
We will devise an algorithm to operate the $\IIE$ model to get the same program behavior as in axiomatic relations $\langle \ProgOrd, \MemOrd, \ReadFrom \rangle$.
In the algorithm, for each instruction in the $\IIE$ model, we need to find its corresponding instruction in the $\ProgOrd$ in axiomatic relations.
Note that this mapping should be an one-to-one mapping, i.e., one instruction in the $\IIE$ model will exactly correspond to one instruction in the axiomatic relations and vice versa, so we do not distinguish between the directions of the mapping.
The algorithm will create this mapping incrementally.
Initially (i.e., before the $\IIE$ model performs any operation), for each processor $i$, we only map the next instruction to execute in processor $i$ of the $\IIE$ model to the oldest instruction in the $\ProgOrd$ of processor $i$ in the axiomatic relations.
After the algorithm starts to operate the $\IIE$ model, whenever we have executed an instruction in a processor in the $\IIE$ model, we map the next instruction to execute in that processor in the $\IIE$ model to the oldest unmapped instruction in  the $\ProgOrd$ of that processor in the axiomatic relations.
The mapping scheme obviously has the following two properties:
\begin{itemize}
    \item The $k$-th executed instruction in a processor in the $\IIE$ model is mapped to the $k$-th oldest instruction in the $\ProgOrd$ of that processor in the axiomatic relations.
    \item In the $\IIE$ model, when a processor has executed $x$ instructions, only the first $x+1$ instructions (i.e., the executed $x$ instructions and the next instruction to execute) of that processor are mapped to instructions in the axiomatic relations.
\end{itemize}
Of course, later in the proof, we will show that the two corresponding instructions (one in the $\IIE$ model and the other in the axiomatic relations) have the same instruction types, same load/store addresses (if they are memory accesses), same store data (if they are stores), and same execution results.
In the following, we will assume the action of adding new instruction mappings as an implicit procedure in the algorithm, so we do not state it over and over again when we explain the algorithm.
When there is no ambiguity, we do not distinguish an instruction in the $\IIE$ model and an instruction in the axiomatic relations if these two instructions corresponds to each other (i.e., the algorithm has built the mapping between them).

Now we give the details of the algorithm.
The algorithm begins with the $\IIE$ model (in initial state), an empty set $Z$, and a queue $Q$ which contains all the memory and fence instructions in $\MemOrd$.
The order of instructions in $Q$ is the same as $\MemOrd$, i.e., the head of $Q$ is the oldest instruction in $\MemOrd$.
The instructions in $Q$ and $Z$ are all considered as instructions in the axiomatic relations.
In each step of the algorithm, we perform one of the followings actions:
\begin{enumerate}
    \item \label{sim:nm} If the next instruction of some processor in the \IIE{} model is a non-memory instruction, then we perform the \wmmNmRule{} operation to execute it in the $\IIE$ model.
    \item \label{sim:ex-st} Otherwise, if the next instruction of some processor in the \IIE{} model is a store, then we perform the \wmmStRule{} operation to execute that store in the $\IIE$ model.
    \item \label{sim:ld-z} Otherwise, if the next instruction of some processor in the \IIE{} model is mapped to a load $L$ in set $Z$, then we perform the \wmmLdRule{} operation to execute $L$ in the $\IIE$ model, and we remove $L$ from $Z$.
    \item Otherwise, we pop out instruction $I$ from the head of $Q$ and process it in the following way:
    \begin{enumerate}
        \item \label{sim:deq-st} If $I$ is a store, then $I$ must have been mapped to a store in some store buffer (we will prove this), and we perform the \wmmDeqSbRule{} operation to dequeue $I$ from the store buffer in the \IIE{} model.
        \item \label{sim:rec} If $I$ is a $\RecInst$ fence, then $I$ must have been mapped to the next instruction to execute in some processor (we will prove this), and we perform the \wmmRecRule{} operation to execute $I$ in the \IIE{} model.
        \item \label{sim:com} If $I$ is a $\ComInst$ fence, then $I$ must have been mapped to the next instruction to execute in some processor (we will prove this), and we perform the \wmmComRule{} operation to execute $I$ in the \IIE{} model.
        \item \label{sim:pop-ld} $I$ must be a load in this case.
        If $I$ has been mapped, then it must be mapped to the next instruction to execute in some processor in the $\IIE$ model (we will prove this), and we perform the \wmmLdRule{} operation to execute $I$ in the $\IIE$ model.
        Otherwise, we just add $I$ into set $Z$.
    \end{enumerate}
\end{enumerate}
For proof purposes, we introduce the following ghost states to the $\IIE$ model:
\begin{itemize}
    \item Field $\SrcSt$ in \regfile{} memory: For each address $a$, we add state $m[a].\SrcSt$ to record the store that writes the current memory value.
    \item Fields $\SrcSt$ in invalidation buffer: For each stale value $\langle a, v\rangle$ in an invalidation buffer, we add state $v.\SrcSt$ to denote the store of this stale value.
\end{itemize}
These two fields are set when a \wmmDeqSbRule{} operation is performed.
Assume the \wmmDeqSbRule{} operation dequeues a store $S$ for address $a$.
In this case, we update $m[a].\SrcSt$ to be $S$.
Let $S_0$ be the original $m[a].\SrcSt$ before this operation is performed.
Then for each new stale value $\langle a, v\rangle$ inserted into any invalidation buffer, we set $v.\SrcSt = S_0$.
It is obvious that memory value $m[a]$ is equal to the value of $m[a].\SrcSt$, and stale value $v$ is equal to the value of $v.\SrcSt$.

For proof purposes, we define a function $\OverwriteSt$.
For each store $S$ in $\MemOrd$, $\OverwriteSt(S)$ returns the store for the same address such that
\begin{itemize}
    \item $S\MemOrd\OverwriteSt(S)$, and
    \item there is no store $S'$ for the same address such that $S\MemOrd S'\MemOrd\OverwriteSt(S)$.
\end{itemize}
In other words, $\OverwriteSt(S)$ returns the store that overwrites $S$ in $\MemOrd$.
($\OverwriteSt(S)$ does not exist if $S$ is the last store for its address in $\MemOrd$.)

Also for proof purposes, at each time in the algorithm, we use $V_i$ to represent the set of every store $S$ in $\MemOrd$ that satisfies all the following requirements:
\begin{enumerate}
    \item \label{req:ib:sb} The store buffer of processor $i$ does not contain the address of $S$.
    \item \label{req:ib:ovw} $\OverwriteSt(S)$ exists and $\OverwriteSt(S)$ has been popped from $Q$.
    \item \label{req:ib:rec} For each $\RecInst$ fence $F$ that has been executed by processor $i$ in the $\IIE$ model, $F \MemOrd \OverwriteSt(S)$.
    \item \label{req:ib:st} For each store $S'$ for the same address that has been executed by processor $i$ in the $\IIE$ model, $S' \MemOrd \OverwriteSt(S)$.
    \item \label{req:ib:ld} For each load $L$ for the same address that has been executed by processor $i$ in the $\IIE$ model, if store $S'\ReadFrom L$ in the axiomatic relations, then $S'\MemOrd \OverwriteSt(S)$.
\end{enumerate}

With the above definitions and new states, we introduce the invariants of the algorithm.
After each step of the algorithm, we have the following invariants for the states of the $\IIE$ model, $Z$ and $Q$:
\begin{enumerate}
    \item \label{inv:comp:po} For each processor $i$, the execution order of all executed instructions in processor $i$ in the $\IIE$ model is a prefix of the $\ProgOrd$ of processor $i$ in the axiomatic relations.
    \item \label{inv:comp:guard} The predicate of any operation performed in this step is satisfied.
    \item \label{inv:comp:inst-res} If we perform an operation to execute an instruction in the $\IIE$ model in this step, the operation is able to get the same instruction result as that of the corresponding instruction in the axiomatic relations.
    \item \label{inv:comp:addr-data} The instruction type, load/stores address, and store data of every mapped instruction in the $\IIE$ model are the same as those of the corresponding instruction in the axiomatic relations.
    \item \label{inv:comp:ld} All loads that have been executed in the $\IIE$ model are mapped exactly to  all loads in $\MemOrd$ but not in $Q$ or $Z$.
    \item \label{inv:comp:fence} All fences that have been executed in processor $i$ are mapped exactly to all fences in $\MemOrd$ but not in $Q$.
    \item \label{inv:comp:st-deq} All stores that have been executed and dequeued from the store buffers in the $\IIE$ model are mapped exactly to all stores in $\MemOrd$ but not in $Q$.
    \item \label{inv:comp:mem} For each address $a$, $m[a].\SrcSt$ in the $\IIE$ model is mapped to the youngest store for $a$, which has been popped from $Q$, in $\MemOrd$.
    \item \label{inv:comp:sb} For each processor $i$, the store buffer of processor $i$ contains exactly every store that has been executed in the $\IIE$ model but still in $Q$.
    \item \label{inv:comp:sb-ord} For any two stores $S_1$ and $S_2$ for the same address in the store buffer of any processor $i$ in the $\IIE$ model, if $S_1$ is older than $S_2$ in the store buffer, then $S_1 \ProgOrd S_2$.
    \item \label{inv:comp:sb-ib} For any processor $i$ and any address $a$, address $a$ cannot be present in the store buffer and invalidation buffer of processor $i$ at the same time.
    \item \label{inv:comp:ib} For any processor $i$, for each store $S$ in $V_i$, the invalidation buffer of processor $i$ contains an entry whose $\SrcSt$ field is mapped to $S$.
    \item \label{inv:comp:stale} For any stale value $\langle a, v\rangle$ in any invalidation buffer, $v.\SrcSt$ has been mapped to a store in $\MemOrd$, and $\OverwriteSt(v.\SrcSt)$ exists, and $\OverwriteSt(v.\SrcSt)$ is not in $Q$.
    \item \label{inv:comp:ib-ord} For any two stale values $v_1$ and $v_2$ for the same address in the invalidation buffer of any processor $i$ in the $\IIE$ model, if $v_1$ is older than $v_2$ in the invalidation buffer, then $v_1.\SrcSt \MemOrd v_2.\SrcSt$.
\end{enumerate}
These invariants guarantee that the algorithm will operate the $\IIE$ model to produce the same program behavior as the axiomatic model.
We now prove inductively that all invariants hold after each step of the algorithm, i.e., we assume all invariants hold before the step.
In case a state is changed in this step, we use superscript 0 to denote the state before this step (e.g., $Q^0$) and use superscript 1 to denote the state after this step (e.g., $Q^1$).
We consider which action is performed in this step.
\begin{itemize}
    \item Action~\ref{sim:nm}: We perform a \wmmNmRule{} operation that executes a non-memory instruction in the $\IIE$ model.
    All the invariants still hold after this step.
    
    \item Action~\ref{sim:ex-st}: We perform a \wmmStRule{} operation that executes a store $S$ for address $a$ in processor $i$ in the $\IIE$ model.
    We consider each invariant.
    \begin{itemize}
        \item Invariants~\ref{inv:comp:po}, \ref{inv:comp:guard}, \ref{inv:comp:addr-data}: These invariants obviously hold.
        \item Invariants~\ref{inv:comp:inst-res}, \ref{inv:comp:ld}, \ref{inv:comp:fence}, \ref{inv:comp:st-deq}, \ref{inv:comp:mem}: These are not affected.
        \item Invariant~\ref{inv:comp:sb}: Note that $S$ is mapped before this step.
        Since $S$ cannot be dequeued from the store buffer before this step, invariant~\ref{inv:comp:st-deq} says that $S$ is still in $Q$.
        Thus, this invariant holds.
        \item Invariant~\ref{inv:comp:sb-ord}: Since $S$ is the youngest store in store buffer and invariant~\ref{inv:comp:po} holds after this step, this invariant also holds.
        \item Invariant~\ref{inv:comp:sb-ib}: Since the \wmmStRule{} operation removes all stale values for $a$ from the invalidation buffer of processor $i$, this invariant still holds.
        \item Invariant~\ref{inv:comp:ib}: For any processor $j$ ($j\neq i$), the action in this step cannot change $V_j$ or the invalidation buffer of processor $j$.
        We only need to consider processor $i$.
        The action in this step cannot introduce any new store into $V_i$, i.e., $V_i^1 \subseteq V_i^0$.
        Also notice that $V_i^1$ does not contain any store for $a$ due to requirement~\ref{req:ib:sb}.
        Since the action in this step only removes values for address $a$ from the invalidation buffer of processor $i$, this invariant still holds for $i$.
        \item Invariants~\ref{inv:comp:stale}, \ref{inv:comp:ib-ord}: These still hold, because we can only remove values from  the invalidation buffer in this step.
    \end{itemize}
    
    \item Action~\ref{sim:ld-z}: We perform a \wmmLdRule{} operation that executes a load $L$ in $Z$.
    (Note that $L$ has been popped from $Q$ before.)
    We assume $L$ is in processor $i$ (both the axiomatic relations and the $\IIE$ model agree on this because of the way we create mappings).
    We also assume that $L$ has address $a$ in the axiomatic relations, and that store $S\ReadFrom L$ in the axiomatic relations.
    According to invariant~\ref{inv:comp:addr-data}, $L$ also has load address $a$ in the $\IIE$ model.
    We first consider several simple invariants:
    \begin{itemize}
        \item Invariants~\ref{inv:comp:po}, \ref{inv:comp:guard}, \ref{inv:comp:ld}: These invariants obviously hold.
        \item Invariants~\ref{inv:comp:fence}, \ref{inv:comp:st-deq}, \ref{inv:comp:mem}, \ref{inv:comp:sb}, \ref{inv:comp:sb-ord}: These are not affected.
        \item Invariant~\ref{inv:comp:sb-ib}: Since the \wmmLdRule{} operation does not change store buffers and can only remove values from the invalidation buffers, this invariant still holds.
        \item Invariants~\ref{inv:comp:stale}, \ref{inv:comp:ib-ord}: These still hold, because we can only remove values from  the invalidation buffer in this step.
    \end{itemize}
    We now consider the remaining invariants, i.e., \ref{inv:comp:inst-res}, \ref{inv:comp:addr-data} and \ref{inv:comp:ib}, according to the current state of $Q$ (note that $Q$ is not changed in this step):
    \begin{enumerate}
        \item $S$ is in $Q$: We show that the \wmmLdRule{} operation can read the value of $S$ from the store buffer of processor $i$ in the $\IIE$ model.
        We first show that $S$ is in the store buffer of processor $i$.
        Since $L$ is not in $Q$, we have $L\MemOrd S$.
        According to the \AxiLdVal{} axiom, we know $S\ProgOrd L$, so $S$ must have been executed.
        Since $S$ is in $Q$, invariant~\ref{inv:comp:st-deq} says that $S$ cannot be dequeued from the store buffer, i.e., $S$ is in the store buffer of processor $i$.
        
        Now we prove that $S$ is the youngest store for $a$ in the store buffer of processor $i$ by contradiction, i.e., we assume there is another store $S'$ for $a$ which is in the store buffer of processor $i$ and is younger than $S$.
        Invariant~\ref{inv:comp:sb-ord} says that $S \ProgOrd S'$.
        Since $S$ and $S'$ are stores for the same address, the \AxiInstOrd{} axiom says that $S\MemOrd S'$.
        Since $S'$ is in the store buffer, it is executed before $L$.
        According to invariant~\ref{inv:comp:po}, $S'\ProgOrd L$.
        Then $S\ReadFrom L$ contradicts with the \AxiLdVal{} axiom.
        
        Now we can prove the invariants:
        \begin{itemize}
            \item Invariant~\ref{inv:comp:inst-res}: This holds because the \wmmLdRule{} operation reads $S$ from the store buffer.
            \item Invariant~\ref{inv:comp:addr-data}: This holds because invariant~\ref{inv:comp:inst-res} holds after this step.
            \item Invariant~\ref{inv:comp:ib}: The execution of $L$ in this step cannot introduce new stores into $V_j$ for any $j$, i.e., $V_j^1 \subseteq V_j^0$.
            Since there is no change to any invalidation buffer when \wmmLdRule{} reads from the store buffer, this invariant still holds.
        \end{itemize}
        
        \item $S$ is not in $Q$ but $\OverwriteSt(S)$ is in $Q$:
        We show that the \wmmLdRule{} operation can read the value of $S$ from the \regfile{} memory.
        Since $S$ has been popped from $Q$ while $\OverwriteSt(S)$ is not, $S$ is the youngest store for $a$ in $\MemOrd$ that has been popped from $Q$.
        According to invariant~\ref{inv:comp:mem}, the current $m[a].\SrcSt$ in the $\IIE$ model is $S$.
        To let \wmmLdRule{} read $m[a]$, we only need to show that the store buffer of processor $i$ does not contain any store for $a$.
        We prove by contradiction, i.e., we assume there is a store $S'$ for $a$ in the store buffer of processor $i$.
        According to invariant~\ref{inv:comp:sb}, $S'$ has been executed in the $\IIE$ model, and $S'$ is still in $Q$.
        Thus, we have $S'\ProgOrd L$ (according to invariant~\ref{inv:comp:po}), and $S \MemOrd S'$.
        Then $S\ReadFrom L$ contradicts with the \AxiLdVal{} axiom.
        
        Now we can prove the invariants:
        \begin{itemize}
            \item Invariant~\ref{inv:comp:inst-res}: This holds because the \wmmLdRule{} operation reads $S$ from the \regfile{} memory $m[a]$.
            \item Invariant~\ref{inv:comp:addr-data}: This holds because invariant~\ref{inv:comp:inst-res} holds after this step.
            \item Invariant~\ref{inv:comp:ib}: The execution of $L$ in this step cannot introduce new stores into $V_j$ for any $j$, i.e., $V_j^1 \subseteq V_j^0$.
            Since there is no change to any invalidation buffer of any processor other than $i$, we only need to consider processor $i$.
            The \wmmLdRule{} removes all values for $a$ from the invalidation buffer of processor $i$, so the goal is to show that there is no store for $a$ in $V_i^1$.
            We prove by contradiction, i.e., assume there is store $S'$ for $a$ in $V_i^1$.
            Requirement~\ref{req:ib:ld} for $V_i$ says that $S\MemOrd \OverwriteSt(S')$.
            Since $\OverwriteSt(S)$ is in $Q$, $\OverwriteSt(S')$ is also in $Q$.
            This contradicts with requirement~\ref{req:ib:ovw}.
            Therefore, there is no store for $a$ in $V_i^1$, and this invariant holds.
        \end{itemize}
        
        \item Both $S$ and $\OverwriteSt(S)$ are not in $Q$:
        We show that the \wmmLdRule{} operation can read the value of $S$ from the invalidation buffer of processor $i$.
        That is, we need to show $S\in V_i^0$.
        We now prove that $S$ satisfies all the requirements for $V_i^0$:
        \begin{itemize}
            \item Requirement~\ref{req:ib:sb}: We prove by contradiction, i.e., we assume there is store $S'$ for $a$ in the store buffer of processor $i$.
            Invariant~\ref{inv:comp:sb} says that $S'$ has been executed but not in $Q$.
            Then we have $S'\ProgOrd L$ (invariant~\ref{inv:comp:po}) and $S \MemOrd S'$.
            Then $S\ReadFrom L$ contradicts with the \AxiLdVal{} axiom.
            \item Requirement~\ref{req:ib:ovw}: This satisfied because we assume $\OverwriteSt(S)$ is not in $Q$.
            \item Requirement~\ref{req:ib:rec}: We prove by contradiction, i.e., we assume that $\RecInst$ fence $F$ has been executed by processor $i$, and $\OverwriteSt(S) \MemOrd F$.
            Since $F$ is executed before $L$, invariant~\ref{inv:comp:po} says that $F\ProgOrd L$.
            Since $\orderFunc(F, L)$, the \AxiInstOrd{} axiom says that $F\MemOrd L$.
            Now we have $S\MemOrd \OverwriteSt(S) \MemOrd F\MemOrd L$.
            Thus, $S\ReadFrom L$ contradicts with the \AxiLdVal{} axiom.
            \item Requirement~\ref{req:ib:st}: We prove by contradiction, i.e., we assume that store $S'$ for $a$ has been executed by processor $i$, and either $S'=\OverwriteSt(S)$ or $\OverwriteSt(S) \MemOrd S'$.
            According to the definition of $\OverwriteSt$, we have $S\MemOrd S'$.
            Since $S'$ has been executed, invariant~\ref{inv:comp:po} says that $S'\ProgOrd L$.
            Then $S\ReadFrom L$ contradicts with the \AxiLdVal{} axiom.
            \item Requirement~\ref{req:ib:ld}: We prove by contradiction, i.e., we assume that store $S'$ and load $L'$ are both for address $a$, $L'$ has been executed by processor $i$, $S'\ReadFrom L'$, and either $S'=\OverwriteSt(S)$ or $\OverwriteSt(S) \MemOrd S'$.
            According to the definition of $\OverwriteSt$, we have $S\MemOrd S'$.
            Since $L'$ has been executed, invariant~\ref{inv:comp:po} says that $L'\ProgOrd L$.
            Since $\orderFunc(L', L)$, the \AxiInstOrd{} axiom says that $L'\MemOrd L$.
            Since $S'\ReadFrom L$, we have $S'\ProgOrd L'$ or $S'\MemOrd L'$.
            Since $L'\ProgOrd L$ and $L'\MemOrd L$, we have $S'\ProgOrd L$ or $S'\MemOrd L$.
            Since $S\MemOrd S'$, $S\ReadFrom L$ contradicts with the \AxiLdVal{} axiom.
        \end{itemize}
        
        Now we can prove the invariants:
        \begin{itemize}
            \item Invariant~\ref{inv:comp:inst-res}: This holds because the \wmmLdRule{} operation reads $S$ from the invalidation buffer of processor $i$.
            \item Invariant~\ref{inv:comp:addr-data}: This holds because invariant~\ref{inv:comp:inst-res} holds after this step.
            \item Invariant~\ref{inv:comp:ib}: The execution of $L$ in this step cannot introduce new stores into $V_j$ for any $j$, i.e., $V_j^1 \subseteq V_j^0$.
            Since there is no change to any invalidation buffer of any processor other than $i$, we only need to consider processor $i$.
            Assume the invalidation buffer entry read by the \wmmLdRule{} operation is $\langle a, v \rangle$, and $v.\SrcSt = S$.
            The \wmmLdRule{} rule removes any stale value $\langle a, v'\rangle$ that is older than $\langle a, v \rangle$ from the invalidation buffer of processor $i$.
            The goal is to show that $v'.\SrcSt$ cannot be in $V_i^1$.
            We prove by contradiction, i.e., we assume that $v'.\SrcSt \in V_i^1$.
            Since $L$ has been executed after this step, requirement~\ref{req:ib:ld} says that $S \MemOrd \OverwriteSt(v'.\SrcSt)$.
            Since $v'$ is older than $v$ in the invalidation buffer before this step, invariant~\ref{inv:comp:ib-ord} says that $v'.\SrcSt \MemOrd v.\SrcSt = S$.
            The above two statements contradict with each other.
            Therefore, this invariant still holds.
        \end{itemize}
    \end{enumerate}
    
    \item Action~\ref{sim:deq-st}: We pop a store $S$ from the head of $Q$, and we perform a \wmmDeqSbRule{} operation to dequeue $S$ from the store buffer.
    Assume that $S$ is for address $a$, and in processor $i$ in the axiomatic relations.
    We first prove that $S$ has been mapped before this step.
    We prove by contradiction, i.e., we assume $S$ has not been mapped to any instruction in the $\IIE$ model before this step.
    Consider the state right before this step.
    Let $I$ be the next instruction to execute in processor $i$ in the $\IIE$ model.
    We know $I$ is mapped and $I\ProgOrd S$.
    The condition for performing action~\ref{sim:deq-st} in this step says that $I$ can only be a fence or load, and we have $\orderFunc(I, S)$.
    According to the \AxiInstOrd{} axiom, $I\MemOrd S$, so $I$ has been popped from $Q^0$.
    \begin{enumerate}
        \item If $I$ is a fence, since $I$ is in $\MemOrd$ but not in $Q^0$, invariant~\ref{inv:comp:fence} says that $I$ must be executed, contradicting our assumption that $I$ is the next instruction to execute.
        \item If $I$ is a load, since $I$ is not executed, and $I$ is in $\MemOrd$, and $I$ is not in $Q^0$, invariant~\ref{inv:comp:ld} says that $I$ must be in $Z^0$.
        Then this algorithm step should use action~\ref{sim:ld-z} instead of action~\ref{sim:deq-st}.
    \end{enumerate}
    Due to the contradictions, we know $S$ must have been mapped.
    Note that the next instruction to execute in processor $i$ cannot be store, because otherwise this step will use action~\ref{sim:ex-st}.
    According to invariant~\ref{inv:comp:addr-data}, $S$ cannot be mapped to the next instruction to execution in processor $i$.
    Therefore $S$ must have been executed in processor $i$ in the $\IIE$ model before this step.
    
    Also according to invariant~\ref{inv:comp:addr-data}, the address and data of $S$ in the $\IIE$ model are the same as those in the axiomatic relations.
    Now we consider each invariant.
    \begin{itemize}
        \item Invariants~\ref{inv:comp:po}, \ref{inv:comp:addr-data}, \ref{inv:comp:st-deq}, \ref{inv:comp:sb}, \ref{inv:comp:sb-ord}: These invariants obviously hold.
        \item Invariants~\ref{inv:comp:inst-res}, \ref{inv:comp:ld}, \ref{inv:comp:fence}: These are not affected.
        
        \item Invariant~\ref{inv:comp:guard}: We prove by contradiction, i.e., we assume there is a store $S'$ for $a$ younger than $S'$ in the store buffer of processor $i$ (before this step).
        According to invariant~\ref{inv:comp:sb}, $S'$ is in $Q$.
        Since $S$ is the head of $Q^0$, $S\MemOrd S'$.
        According to invariant~\ref{inv:comp:sb-ord}, $S'\ProgOrd S$.
        Since $\orderFunc(S', S)$, $S'\MemOrd S$, contradicting with previous statement.
        Thus, the predicate of the \wmmDeqSbRule{} operation is satisfied, and the invariant holds.
        
        \item Invariant~\ref{inv:comp:mem}: $S$ is the youngest instruction in $\MemOrd$ that has been popped from $Q$, and $m[a].\SrcSt$ is updated to $S$.
        Thus, this invariant still holds.
        
        \item Invariant~\ref{inv:comp:sb-ib}: Since the  \wmmDeqSbRule{} operation will not insert the stale value into an invalidation buffer of a processor if the store buffer of that processor contains the same address, this invariant still holds.
        
        \item Invariant~\ref{inv:comp:ib}: For any processor $j$, the action this step will not remove stores from $V_j$ but may introduce new stores to $V_j$, i.e., $V_j^0 \subseteq V_j^1$.
        We consider the following two types of processors.
        \begin{enumerate}
            \item Processor $i$: We show that $V_i^0 = V_i^1$.
            We prove by contradiction, i.e., we assume there is store $S'$ such that $S'\in V_i^1$ but $S'\notin V_i^0$.
            Since $S'$ satisfies requirements~\ref{req:ib:rec}, \ref{req:ib:st}, \ref{req:ib:ld} after this step, it also satisfies these three requirements before this step.
            Then $S'$ must fail to meet at least one of requirements~\ref{req:ib:sb} and \ref{req:ib:ovw} before this step.
            \begin{enumerate}
                \item If $S'$ does not meet requirement~\ref{req:ib:sb} before this step, then $S'.\Addr$ is in the store buffer of processor $i$ before this step and $S'.\Addr$ is not in this store buffer after this step.
                Thus, $S'.\Addr$ must be $a$.
                Since $S'$ meets requirement~\ref{req:ib:st} before this step and $S$ has been executed by processor $i$ before this step, we know $S\MemOrd\OverwriteSt(S')$.
                Since $S'$ meets requirement~\ref{req:ib:ovw} after this step, $\OverwriteSt(S')$ is not in $Q^1$.
                Since $Q^1$ is derived by popping the oldest store form $Q^0$, we know $S$ is not in $Q^0$.
                Since $S$ is in the store buffer before this step, this contradicts invariant~\ref{inv:comp:sb}.
                Therefore this case is impossible.
                \item If $S'$ does not meet requirement~\ref{req:ib:ovw} before this step, then $\OverwriteSt(S')$ is in $Q^0$ but not in $Q^1$.
                Then $\OverwriteSt(S') = S$.
                Since $S$ has been executed by processor $i$, $S'$ will fail to meet requirement~\ref{req:ib:st} after this step.
                This contradicts with $S'\in V_i^1$, so this case is impossible either.
            \end{enumerate}
            Now we have proved that $V_i^0 = V_i^1$.
            Since the \wmmDeqSbRule{} operation does not change the invalidation buffer of processor $i$, this invariant holds for processor $i$.
            
            \item Processor $j$ ($\neq i$):
            We consider any store $S'$ such that $S'\in V_j^1$ but $S'\notin V_j^0$.
            Since $S'$ satisfies requirements~\ref{req:ib:sb}, \ref{req:ib:rec}, \ref{req:ib:st}, \ref{req:ib:ld} after this step, it also satisfies these four requirements before this step.
            Then $S'$ must fail to meet requirement requirement~\ref{req:ib:ovw} before this step, i.e., $\OverwriteSt(S')$ is in $Q^0$ but not in $Q^1$.
            Then $\OverwriteSt(S') = S$.
            According to invariant~\ref{inv:comp:mem}, we know $S'$ is $m[a].\SrcSt$.
            Consider the following two cases.
            \begin{enumerate}
                \item The store buffer of processor $j$ contains address $a$:
                Since $S'$ cannot meet requirement~\ref{req:ib:sb}, $V_j^1 = V_j^0$.
                Since \wmmDeqSbRule{} cannot remove any value from the invalidation buffer of processor $j$, this invariant holds.
                \item The store buffer of processor $j$ does not contain address $a$:
                In this case,  the \wmmDeqSbRule{} operation will insert stale value $\langle a, m[a]^0\rangle$ into the invalidation buffer of processor $j$, so the invariant still holds.
            \end{enumerate}
        \end{enumerate}        
        
        \item Invariant~\ref{inv:comp:stale}: The $\SrcSt$ field of all the newly inserted stale values in this step are equal to $m[a].\SrcSt^0$.
        According to invariant~\ref{inv:comp:mem}, $m[a].\SrcSt^0$ is the youngest store for $a$ in $\MemOrd$ that is not in $Q^0$.
        Since $S$ is the head of $Q^0$, we know $\OverwriteSt(m[a].\SrcSt^0) = S$.
        Since $S$ is not in $Q^1$, this invariant still holds.
        
        \item Invariant~\ref{inv:comp:ib-ord}: Assume $\langle a,v\rangle$ is the new stale value inserted into the invalidation buffer of processor $j$ in this step.
        We need to show that for any stale value $\langle a,v'\rangle$ that is in this invalidation buffer before this step, $v'.\SrcSt\MemOrd v.\SrcSt$.
        According to invariant~\ref{inv:comp:stale}, $\OverwriteSt(v'.\SrcSt)$ is not in $Q^0$.
        Since $v.\SrcSt = m[a].\SrcSt^0$, according to invariant~\ref{inv:comp:mem}, $v.\SrcSt$ is the youngest store for $a$ in $\MemOrd$ that is not in $Q^0$.
        Therefore $v'.\SrcSt\MemOrd v.\SrcSt$.
    \end{itemize}
    
    \item Action~\ref{sim:rec}: We pop a $\RecInst$ fence $F$ from $Q$, and perform a \wmmRecRule{} operation to execute it in the $\IIE$ model.
    Assume $F$ is in processor $i$ in the axiomatic relations.
    We first prove that $F$ has been mapped before this step. 
    We prove by contradiction, i.e., we assume $F$ is not mapped before this step.
    Consider the state right before this step.
    Let $I$ be the next instruction to execute in processor $i$ in the $\IIE$ model.
    We know $I$ is mapped and $I\ProgOrd F$.
    The condition for performing action~\ref{sim:rec} in this step says that $I$ can only be a fence or load, so we have $\orderFunc(I, F)$.
    According to the \AxiInstOrd{} axiom, $I\MemOrd F$, so $I$ has been popped from $Q^0$.
    \begin{enumerate}
        \item If $I$ is a fence, since $I$ is in $\MemOrd$ but not in $Q^0$, invariant~\ref{inv:comp:fence} says that $I$ must be executed, contradicting our assumption that $I$ is the next instruction to execute.
        \item If $I$ is a load, since $I$ is not executed, and $I$ is in $\MemOrd$, and $I_k$ is not in $Q^0$, invariant~\ref{inv:comp:ld} says that $I$ must be in $Z^0$.
        Then this algorithm step should use action~\ref{sim:ld-z} instead of action~\ref{sim:deq-st}.
    \end{enumerate}
    Due to the contradictions, we know $F$ must have been mapped before this step.
    According to invariant~\ref{inv:comp:fence}, since $F$ is in $Q^0$, $F$ must have not been executed in the $\IIE$ model.
    Thus, $F$ is mapped to the next instruction to execute in processor $i$ in the $\IIE$ model.
    
    Now we consider each invariant:
    \begin{itemize}
        \item Invariants~\ref{inv:comp:po}, \ref{inv:comp:guard}, \ref{inv:comp:addr-data}, \ref{inv:comp:fence}: These obviously hold.
        \item Invariants~\ref{inv:comp:inst-res}, \ref{inv:comp:ld}, \ref{inv:comp:st-deq}, \ref{inv:comp:mem}, \ref{inv:comp:sb}, \ref{inv:comp:sb-ord}: These are not affected.
        \item Invariants~\ref{inv:comp:sb-ib}, \ref{inv:comp:stale}, \ref{inv:comp:ib-ord}: These invariants hold, because the invalidation buffer of any processor $j$ ($\neq i$) is not changed, and the invalidation buffer of processor $i$ is empty after this step.
        \item Invariant~\ref{inv:comp:ib}: For any processor $j$ ($\neq i$), $V_j^1 = V_j^0$ and the invalidation buffer of processor $j$ is not changed in this step.
        Thus, this invariant holds for any processor $j$ ($\neq i$).
        We now consider processor $i$.
        The invalidation buffer of processor $i$ is empty after this step, so we need to show that $V_i^1$ is empty.
        We prove by contradiction, i.e., we assume there is a store $S \in V_i^1$.
        Since $F$ has been executed in processor $i$ after this step, requirement~\ref{req:ib:rec} says that $F\MemOrd\OverwriteSt(S)$.
        Since $F$ is the head of $Q^0$, $\OverwriteSt(S)$ must be in $Q^1$.
        Then $S$ fails to meet requirement~\ref{req:ib:ovw} after this step, contradicting with $S \in V_i^1$.
        Therefore $V_i^1$ is empty, and this invariant also holds for processor $i$.
    \end{itemize}
    
    \item Action~\ref{sim:com}: We pop a $\ComInst$ fence $F$ from $Q$, and perform a \wmmComRule{} operation to execute it in the $\IIE$ model.
    Assume $F$ is in processor $i$ in the axiomatic relations.
    Using the same argument as in the previous case, we can prove that $F$ is mapped to the next instruction to execute in processor $i$ in the $\IIE$ model (before this step).
    Now we consider each invariant:
    \begin{itemize}
        \item Invariants~\ref{inv:comp:po}, \ref{inv:comp:addr-data}, \ref{inv:comp:fence}: These obviously hold.
        \item Invariants~\ref{inv:comp:inst-res}, \ref{inv:comp:ld}, \ref{inv:comp:st-deq}, \ref{inv:comp:mem}, \ref{inv:comp:sb}, \ref{inv:comp:sb-ord}, \ref{inv:comp:sb-ib}, \ref{inv:comp:ib}, \ref{inv:comp:stale}, \ref{inv:comp:ib-ord}: These are not affected.
        \item Invariants~\ref{inv:comp:guard}: We prove by contradiction, i.e., we assume there is store $S$ in the store buffer of processor $i$ before this step.
        According to invariant~\ref{inv:comp:sb}, $S$ has been executed in processor $i$ and is in $Q^0$.
        Thus, we have $S\ProgOrd F$.
        Since $\orderFunc(S, F)$, the \AxiLdVal{} axiom says that $S\MemOrd F$.
        Then $F$ is not the head of $Q^0$, contradicting with the fact that we pop $F$ from the head of $Q^0$.
    \end{itemize}
    
    \item Action~\ref{sim:pop-ld}: We pop a load $L$ from $Q$.
    Assume that $L$ is for address $a$ and is in processor $i$ in the axiomatic relations.
    If we add $L$ to $Z$, then all invariants obviously hold.
    We only need to consider the case that we perform a \wmmLdRule{} operation to execute $L$ in the $\IIE$ model.
    In this case, $L$ is mapped before this step.
    Since $L$ is in $Q^0$, according to invariant~\ref{inv:comp:ld}, $L$ must be mapped to an unexecuted instruction in the $\IIE$ model.
    That is, $L$ is mapped to the next instruction to execute in processor $i$ in the $\IIE$ model.
    Invariant~\ref{inv:comp:addr-data} ensures that $L$ has the same load address in the $\IIE$ model.
    We first consider several simple invariants:
    \begin{itemize}
        \item Invariants~\ref{inv:comp:po}, \ref{inv:comp:guard}, \ref{inv:comp:ld}: These invariants obviously hold.
        \item Invariants~\ref{inv:comp:fence}, \ref{inv:comp:st-deq}, \ref{inv:comp:mem}, \ref{inv:comp:sb}, \ref{inv:comp:sb-ord}: These are not affected.
        \item Invariant~\ref{inv:comp:sb-ib}: Since the \wmmLdRule{} operation does not change store buffers and can only remove values from the invalidation buffers, this invariant still holds.
        \item Invariants~\ref{inv:comp:stale}, \ref{inv:comp:ib-ord}: These still hold, because we can only remove values from  the invalidation buffer in this step.
    \end{itemize}
    Assume store $S\ReadFrom L$ in the axiomatic relations.
    We prove the remaining invariants (i.e., \ref{inv:comp:inst-res}, \ref{inv:comp:addr-data} and \ref{inv:comp:ib}) according to the position of $S$ in $\MemOrd$.
    \begin{enumerate}
        \item $L\MemOrd S$: We show that the \wmmLdRule{} can read $S$ from the store buffer of processor $i$ in the $\IIE$ model.
        The \AxiLdVal{} axiom says that $S\ProgOrd L$.
        Then $S$ must have been executed in processor $i$ in the $\IIE$ model according to invariant~\ref{inv:comp:po}.
        Since $S$ is in $Q^0$, invariant~\ref{inv:comp:sb} ensures that $S$ is in the store buffer of processor $i$ before this step.
        
        To let \wmmLdRule{} read $S$ from the store buffer, we now only need to prove that $S$ is the youngest store for $a$ in the store buffer of processor $i$.
        We prove by contradiction, i.e., we assume there is another store $S'$ for $a$ which is in the store buffer of processor $i$ and is younger than $S$.
        Invariant~\ref{inv:comp:sb-ord} says that $S \ProgOrd S'$.
        Since $S$ and $S'$ are stores for the same address, the \AxiInstOrd{} axiom says that $S\MemOrd S'$.
        Since $S'$ is in the store buffer, it is executed before $L$.
        According to invariant~\ref{inv:comp:po}, $S'\ProgOrd L$.
        Then $S\ReadFrom L$ contradicts with the \AxiLdVal{} axiom.
        
        Now we can prove the invariants:
        \begin{itemize}
            \item Invariant~\ref{inv:comp:inst-res}: This holds because the \wmmLdRule{} operation reads $S$ from the store buffer.
            \item Invariant~\ref{inv:comp:addr-data}: This holds because invariant~\ref{inv:comp:inst-res} holds after this step.
            \item Invariant~\ref{inv:comp:ib}: The execution of $L$ in this step cannot introduce new stores into $V_j$ for any $j$, i.e., $V_j^1 \subseteq V_j^0$.
            Since there is no change to any invalidation buffer when \wmmLdRule{} reads from the store buffer, this invariant still holds.
        \end{itemize}
        
        \item $S \MemOrd L$:
        We show that the \wmmLdRule{} operation can read the value of $S$ from the \regfile{} memory.
        Since $L$ is the head of $Q^0$, $S$ is not in $Q^0$.
        According to the \AxiLdVal{} axiom, there cannot be any store for $a$ between $S$ and $L$ in $\MemOrd$.
        Thus, $S$ is the youngest store for $a$ in $\MemOrd$ that has been popped from $Q$.
        According to invariant~\ref{inv:comp:mem}, the current $m[a].\SrcSt$ in the $\IIE$ model is $S$.
        
        To let \wmmLdRule{} read $m[a]$, we only need to show that the store buffer of processor $i$ does not contain any store for $a$.
        We prove by contradiction, i.e., we assume there is a store $S'$ for $a$ in the store buffer of processor $i$.
        According to invariant~\ref{inv:comp:sb}, $S'$ has executed in the $\IIE$ model, and $S'$ is in $Q^0$.
        Thus, we have $S'\ProgOrd L$ (according to invariant~\ref{inv:comp:po}), and $S \MemOrd S'$.
        Then $S\ReadFrom L$ contradicts with the \AxiLdVal{} axiom.
        
        Now we can prove the invariants:
        \begin{itemize}
            \item Invariant~\ref{inv:comp:inst-res}: This holds because the \wmmLdRule{} operation reads $S$ from the \regfile{} memory $m[a]$.
            \item Invariant~\ref{inv:comp:addr-data}: This holds because invariant~\ref{inv:comp:inst-res} holds after this step.
            \item Invariant~\ref{inv:comp:ib}: The execution of $L$ in this step cannot introduce new stores into $V_j$ for any $j$, i.e., $V_j^1 \subseteq V_j^0$.
            Since there is no change to any invalidation buffer of any processor other than $i$, we only need to consider processor $i$.
            The \wmmLdRule{} removes all values for $a$ from the invalidation buffer of processor $i$, so the goal is to show that there is no store for $a$ in $V_i^1$.
            We prove by contradiction, i.e., assume there is store $S'$ for $a$ in $V_i^1$.
            Requirement~\ref{req:ib:ld} for $V_i$ says that $S\MemOrd \OverwriteSt(S')$.
            Since $S$ is the youngest store for $a$ that is in $\MemOrd$ but not $Q^0$, $\OverwriteSt(S')$ must be in $Q^0$.
            This contradicts with requirement~\ref{req:ib:ovw}.
            Therefore, there is no store for $a$ in $V_i^1$, and this invariant holds.
        \end{itemize}
    \end{enumerate}
\end{itemize}
\end{proof}

By combining Theorems~\ref{thm:wmm:i2e<axi} and \ref{thm:wmm:axi<i2e}, we prove the equivalence between the \IIE{} model and the axiomatic model of WMM.
\begin{theorem}[Equivalence]
    WMM \IIE{} model $\equiv$ WMM axiomatic model.
\end{theorem}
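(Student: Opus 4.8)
The plan is to read the statement as an equality of the sets of program behaviors admitted by the two definitions, and to obtain that equality from the two inclusions that have already been proved. Here a \emph{behavior} of a program is the observable outcome of a (terminating) execution — concretely, the per-processor sequence of committed-instruction results together with the final contents of the \regfile{} memory $m$. Under this identification, the symbol $\equiv$ in the statement is just set equality, and set equality is established by proving both $\subseteq$ and $\supseteq$.

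First I would invoke Theorem~\ref{thm:wmm:i2e<axi} (Soundness) for the forward inclusion: for any run of the WMM \IIE{} abstract machine, the ghost-state construction in that proof produces relations $\langle \ProgOrd, \MemOrd, \ReadFrom \rangle$ that satisfy the WMM axioms and, by invariants~\ref{inv:sound:rf:mem} and \ref{inv:sound:mem-val} together with the observation that at the end of the run all store buffers are empty, realize exactly the same behavior; hence WMM \IIE{} model $\subseteq$ WMM axiomatic model. Then I would invoke Theorem~\ref{thm:wmm:axi<i2e} (Completeness) for the reverse inclusion: given any $\langle \ProgOrd, \MemOrd, \ReadFrom \rangle$ satisfying the WMM axioms, the simulation algorithm of that proof drives the \IIE{} machine through a sequence of legal operations (its predicates always met, by invariant~\ref{inv:comp:guard}) whose committed results and final memory agree with the axiomatic execution, by invariants~\ref{inv:comp:inst-res} and \ref{inv:comp:mem}; hence WMM axiomatic model $\subseteq$ WMM \IIE{} model. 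Intersecting the two inclusions gives WMM \IIE{} model $\equiv$ WMM axiomatic model, which is the claim.

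The only point that needs genuine care — and the step where a gap could hide — is making sure the two component theorems are phrased against the \emph{same} notion of ``program behavior'', so that the two inclusions really compose into an equality: the soundness proof must preserve \emph{all} observable results when translating an \IIE{} run to axiomatic relations, and the completeness proof must reproduce \emph{all} observable results when simulating axiomatic relations on the \IIE{} machine (including, e.g., that no behavior is silently dropped for non-terminating programs, which is why both proofs are stated for programs that terminate and end with empty store buffers). Since this matching is exactly what the result-tracking invariants in the two proofs assert, no argument beyond this bookkeeping observation is required; the proof of the Equivalence theorem is therefore a one-line corollary of Theorems~\ref{thm:wmm:i2e<axi} and \ref{thm:wmm:axi<i2e}.
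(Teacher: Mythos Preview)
Your proposal is correct and matches the paper's own proof, which is exactly the one-line corollary you describe: the equivalence follows immediately by combining the Soundness theorem (Theorem~\ref{thm:wmm:i2e<axi}) with the Completeness theorem (Theorem~\ref{thm:wmm:axi<i2e}). Your additional discussion of which invariants guarantee that the two directions speak about the same notion of program behavior is more explicit than what the paper writes, but the underlying argument is identical.
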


\bibliographystyle{IEEEtran}
\bibliography{ref}

\end{document}